\documentclass[a4paper,UKenglish,cleveref, autoref, thm-restate, numberwithinsect]{lipics-v2021}

\usepackage{stmaryrd,placeins}
\usepackage{amsthm}
\usepackage{tikz,tikz-qtree}
\usetikzlibrary{decorations.pathmorphing}
\usetikzlibrary{backgrounds,calc,positioning}
\usepackage{complexity,bm}
\usepackage{mathtools}
\usepackage{forest}
\usepackage[ruled,linesnumbered]{algorithm2e}

\nolinenumbers

\colorlet{darkgreen}{green!80!black}
\colorlet{lightblue}{blue!40!white}

\newcommand{\alp}{{\mathsf{alph}}}
\newcommand{\depth}{\mathsf{depth}}
\newcommand{\calA}{\mathcal{A}}
\newcommand{\calB}{\mathcal{B}}
\newcommand{\calG}{\mathcal{G}}
\newcommand{\calT}{\mathcal{T}}
\newcommand{\calD}{\mathcal{D}}
\newcommand{\calH}{\mathcal{H}}
\newcommand{\bigO}{\mathcal{O}}
\newcommand{\valA}[1]{\llbracket #1 \rrbracket}
\newcommand{\valGA}[2]{\valA{#2}_{#1}}

\newcommand{\reducecommon}[1]{\operatorname{#1-\mathsf{reduce}}}
\newcommand{\lreduce}{\reducecommon{\ell}}
\newcommand{\rreduce}{\reducecommon{\mathsf{r}}}
\newcommand{\pos}{{\mathsf{p}}}
\newcommand{\Bool}{\mathsf{Bool}}
\newcommand{\encl}{\mathsf{encl}}
\newcommand{\ch}{\mathsf{ch}}

\newcommand{\pat}{\mathsf{path}}

\newcommand{\red}{\mathsf{reduce}}
\newcommand{\Path}{\mathsf{P}_{\mathsf{t}}}
\newcommand{\Ssf}{\mathsf{S}}
\newcommand{\Sa}{\mathsf{S}_{\mathsf{a}}}
\newcommand{\St}{\mathsf{S}_{\mathsf{t}}}
\newcommand{\Sat}{\mathsf{S}_{\mathsf{at}}}

\renewcommand{\succ}{\textsf{succ}}
\renewcommand{\min}{\textsf{min}}
\renewcommand{\max}{\textsf{max}}
\newcommand{\rest}{\mathord\restriction}
\newcommand{\sfR}{\mathsf{R}}
\newcommand{\sfL}{\mathsf{L}}

\hideLIPIcs

\title{Ranked MSO-enumeration over compressed words}

\author{Markus Lohrey}{Universit\"at Siegen, Germany}{lohrey@eti.uni-siegen.de}{https://orcid.org/0000-0002-4680-7198}{}

\authorrunning{M.\ Lohrey}

\ccsdesc[500]{Theory of computation~Database query processing and optimization (theory)}

\keywords{monadic second-order queries, enumeration algorithms, grammar compression, factorization trees} 

\begin{document}

\maketitle
\begin{abstract}
It is shown that the ranked query enumeration problem for a fixed MSO-query on strings
 can be solved with linear preprocessing and constant delay in the grammar-compressed setting, where the 
 input string is given by a so-called straight-line program, i.e., a context-free grammar that produces
 exactly one string. Moreover, `ranked' means that the output tuples of the MSO-query are printed in a specific
 order  that has to be MSO-definable. This is the first result for ranked query enumeration on compressed data.
 A corollary of this result is that for a fixed polyregular function $f$ and a word $w$ that is given by a straight-line program
 of size $n$, one can list after preprocessing time $\bigO(n)$ the symbols in $f(w)$ from left to right with constant delay, which
 generalizes a result of Bojanczyk for the case where $w$ is uncompressed.
 The proofs for these results are based on factorization trees, which are made accessible to the grammar-compressed setting
 (a contribution of independent interest).
\end{abstract}

\section{Introduction}

The evaluation of queries formulated in \emph{monadic second order logic} (MSO) is a classical problem in database theory and finite model theory. If we 
allow arbitrary structures, then already the \emph{model-checking problem} for MSO (and even first-order logic) is 
\textsf{PSPACE}-complete. Here, the input consists of the structure and the formula.
In real applications from data base theory or verification, the structure can be huge whereas the formula is often small. This motivates
the study of \emph{data complexity}, where the formula is fixed and not part of the input. Still, MSO-model-checking is hard for all levels
of the polynomial time hierarchy with respect to data complexity. In order to get efficient algorithms, one has to restrict the class of allowed input structures.
A famous result in this context is Courcelle's  meta-theorem~\cite{Courcelle1990}, saying that
the data complexity of MSO-model-checking is linear time on the class of all structures of treewidth at most $k$ (for every fixed $k$).
In this paper, we will only consider data complexity.

The model-checking problem for MSO is the same as the evaluation problem for \emph{boolean} MSO-queries. In data base theory, one is usually 
interested in queries with free variables and the computation of all query answers. 
Here, we consider  MSO-queries $\Phi(x_1, \ldots, x_k)$ with free first-order variables $x_i$.
The goal is then to compute all tuples $(a_1, \ldots, a_k)$ in an input structure $\calA$ (that models the input data) such that $\calA \models \Phi(a_1, \ldots, a_k)$ 
holds.
 The number of these tuples is clearly bounded by $|\calA|^k$ which can be very large in applications, where the input data are large.
In order to have a reasonable notion of efficient algorithms for computing query answers, the concept of \emph{enumeration algorithms} has been introduced.
Such an algorithm starts for a given input structure $\calA$ with a \emph{preprocessing phase} building a suitable data structure for the \emph{enumeration phase} that
starts right after the preprocessing phase. In the enumeration phase the algorithm computes all tuples $(a_1, \ldots, a_k)$ satisfying the (fixed) MSO-query
$\Phi(x_1, \ldots, x_k)$ without producing a tuple twice.
The gold standard for the efficiency of an enumeration algorithm is a \emph{linear preprocessing time} $\bigO(|\calA|)$  and \emph{constant delay}, where
the latter means that there is a constant bounding the following computation times: (i) the time for computing the first output tuple, (ii) the time
between outputting two tuples, and (iii) the time between outputting the last tuple and final termination. Note that constant delay makes only sense 
for data complexity since otherwise the length $k$ of the tuples would not be fixed. Moreover constant delay assumes that every element of the input
structure $\calA$ fits into constantly many output registers.

Note that in the above setting, the MSO-formula only contains free first-order formulas. One may also allow queries
 $\Phi(x_1, \ldots, x_k, X_1, \ldots, X_l)$ with free set variable $X_i$. Although this is not the focus of the paper, we mention 
 it because some of the related papers cited below consider this more general setting. Clearly, the output tuples then no longer fit into 
 constant space and therefore cannot be printed in constant time. The best possible delay bound is then \emph{output-linear delay}, meaning that
 the delay for printing an output tuple is linearly bounded in the size of the tuple. For a fixed query 
$\Phi(x_1, \ldots, x_k)$ containing only free first-order variables, output-linear delay is equivalent to constant delay.

Enumeration algorithms with linear preprocessing and constant delay or output-linear delay are known for several classes
of queries and structures, including  MSO-queries on strings, trees and structures with bounded treewidth~\cite{AmarilliBMN19,Bagan06,Courcelle2009,KazanaS13,MMN22,Niewerth18,NiewerthSegoufin2018} and for regular document spanners (which is a subclass of MSO-queries) on strings~\cite{AmarilliEtAl2021,FlorenzanoEtAl2020}. 

As mentioned above, the input structure for query evaluation can be huge in applications. This has motivated the investigation of query enumeration algorithms for structures that
are given in a \emph{compressed} form. The compression format that has been studied so far in the context of query enumeration is \emph{grammar-compression}. For strings
this means that the input string is given by a context-free grammar that produces exactly one string.
Such grammars are also known as \emph{straight-line algorithm} (SLP for short) and have received a lot of attention in data compression, information theory, stringology and even areas
like computational topology and group theory; see the survey~\cite{Loh12survey} for more details. In data base theory, 
query enumeration algorithms on compressed structures have been first studied in the context of strings and document spanners 
\cite{SchmidS21,SchmidSchweikardt2022,SchmidS22}. This work has been extended to full MSO in \cite{MunozRiveros2025}, where it is shown
that for a fixed MSO-query and a given input word that is represented by an SLP $\calG$, the set of all query answers can be enumerated 
 with linear processing (meaning time $\bigO(|\calG|)$
where $|\calG|$ is the size of the SLP $\calG$) and output-linear delay 
(the authors of \cite{MunozRiveros2025} consider MSO-formulas with free set variables). In \cite{LohreyS26} this result has been extended to unranked
trees that are given by so-called forest straight-line programs \cite{GasconLMRS20} 
(an extension of SLPs that allows to compress unranked trees), whereas \cite{LohreyMS25} goes one step
further to grammar-compressed graphs of bounded degree but restricts to queries expressed in first-order logic.

A drawback in all the above mentioned results on query answer enumeration is the fact that the output tuples are 
printed in some opaque order that cannot be controlled by the user. To overcome this drawback, researchers have developed the concept
of \emph{ranked} enumeration, where the output tuples are printed in a specific order that can be specified in a suitable formalism.
Ranked enumeration for MSO-queries has been considered in \cite{BourhisGJR21,GMS24} for strings and \cite{AmarilliBCM24}
for trees. In \cite{BourhisGJR21,GMS24}, query answers are enumerated in the order of decreasing weights, where the weight of an input word
is an element of an ordered abelian group that is computed by a so-called cost transducer.
In contrast, \cite{AmarilliBCM24} uses ranking functions satisfying a subset-monotonicity property.

In the worst case, the delay of the enumeration algorithms from \cite{AmarilliBCM24,BourhisGJR21,GMS24} is logarithmic in the size of the input structure
(with some improvements in \cite{GMS24} over \cite{BourhisGJR21}). Moreover,
the input structure is not compressed  in \cite{AmarilliBCM24,BourhisGJR21,GMS24}.
Ranked enumeration on compressed data has not been studied so far. In this paper, we make the first step in this direction. 
We consider a fixed MSO-formula with $k$ free first-order variables, an SLP-compressed input string and a linear order on $k$-tuples of string positions that is MSO-definable 
(a good example for such an order is the lexicographic order on $k$-tuples of positions; see Example~\ref{ex-lex}). 
Our main result (Theorem~\ref{thm main}) states that in this setting, the set of all query answers can be enumerated in the specified order with linear preprocessing and constant 
delay.  

Before we say something about the proof ingredients, 
let us mention an easy corollary concerning \emph{polyregular functions}. 
These string-to-string functions can be characterized in many different ways (e.g. by MSO-interpretations, pebble-two-way transducers,
certain imperative as well as functional programming languages, and as the class obtained from certain primitive functions by closing under composition)
 and have received a lot of attention in recent years; see \cite{Boj18,BojanczykKL19} for more details.
In \cite{Boj18}, it is shown that for a fixed polyregular function $f$ and a given input word $w$, one can enumerate the symbols in $f(w)$  
from left to right after linear time preprocessing in constant delay. Note that $|f(w)|$ is bounded polynomially in $|w|$, where the polynomial depends on $f$.
Recently, it has been shown that polyregular functions do not preserve compression, meaning that there exists a specific polyregular function $f$, 
a family of strings $w_n$
such that every $w_n$ has an SLP of size $\bigO(n)$, whereas the size of a smallest SLP for $f(n)$ is lower bounded by $2^n$; see \cite[Lemma 13]{BL26} and its proof.
Our corollary of Theorem~\ref{thm main} extends the above mentioned enumeration result from \cite{Boj18} to the compressed setting: For every fixed
polyregular function $f$ and a given input SLP $\calG$ producing the string $w$, one can enumerate the symbols in $f(w)$  
from left to right after a linear time preprocessing in constant delay. 

We conclude the introduction with a few words on the proof ingredients for our main result. A main tool are factorization trees for words, which have
been applied in database theory before  \cite{BojanczykP08}. A factorization tree for a word $w \in \Sigma^+$ with respect to a morphism $h : \Sigma^* \to M$ into a finite monoid
is a constant-height parse tree of $w$ such that for every node with more than two children, all children evaluate in $M$ to the same idempotent element of $M$.
The existence of such  factorization trees for every word has been shown by Simon \cite{Simon90}.
In our setting, the word $w$ is given by an SLP $\calG$. We introduce the notion of a \emph{Simon SLP}, which, roughly speaking, is an SLP that is compatible
with the structure of a factorization tree for the string produced by the SLP. We show in Section~\ref{sec SSLP} that every SLP can be transformed in linear time into
an equivalent Simon SLP.
We then combine these Simon SLPs with a technique for constant-time two-way traversal
of SLP-compressed strings from \cite{LohreyMR18} (this technique has to be suitably extended for our purpose; see Section~\ref{sec-traversal}) in order to extend the above mentioned
algorithm from \cite{Boj18} for enumerating the symbols in the image $f(w)$ of a fixed polyregular function $f$ to the setting, where $w$ is given by an SLP.

\subparagraph{Related work.}
We mentioned already the existing work on ranked MSO-enumeration.
Related to our work is also the following recent result of Muñoz \cite{Munoz26} on direct ranked access: 
For a fixed MSO-query (containing only free first-order variables) and a given SLP of size $n$
producing a  string of length $N$, one can compute in time $\bigO(n)$ a data structure that allows to compute for a given number $m$ in time 
$\bigO(\log N)$ the lexicographically $m$-th query answer.

\section{Preliminaries}

For a mapping $f : M \to N$ and a subset $P \subseteq M$ we write $f\rest_P : P \to N$ for the restriction of 
$f$ to $P$. For integers $i \le j$ we use the notation $[i,j] = \{ k \in \mathbb{Z} : i \le k \le j \}$.

Given an alphabet of symbols $\Sigma$, $\Sigma^*$ denotes the set of all finite words
over the alphabet $\Sigma$, including the empty word $\varepsilon$. The set of non-empty
words is denoted by $\Sigma^+ = \Sigma^* \setminus \{\varepsilon\}$.
The length of a word $s$ is denoted with $|s|$. If $s = a_1 a_2 \cdots a_n$ ($a_i \in \Sigma$) then 
we define $s[i] = a_i$ for $i \in [1,n]$ and $s[i,j] = a_i a_{i+1} \cdots a_j$ for $1 \le i \le j \le n$.
The alphabet of $s$ is $\alp(s) = \{a_1, a_2, \ldots, a_n\}$.

\subparagraph*{Finite automata and transducers.}
We assume some knowledge from  automata theory; see e.g. \cite{Saka09} for further details. 
Consider a nondeterministic finite automaton $\calA$ over the alphabet $\Sigma$ with $n$ states.
We say that $\calA$ is a trim NFA if  every state of $\calA$ lies on a path from the initial state to a final state. Clearly, every NFA can be transformed
into an equivalent trim NFA without increasing the number of states.

One can represent the dynamics of an NFA $\calA$ by a monoid of 
boolean $(n \times n)$-matrices. For this, assume w.l.o.g.~that the set of states of $\calA$ is
$\{1,\ldots,n\}$. For every letter $a \in \Sigma$ define the boolean matrix $M_a$ by setting the entry of $M_a$ in row $i$ and column $j$ to $1$ if there is a transition
from state $i$ to state $j$ reading the letter $a$, otherwise the entry is $0$. Then $\Bool(\calA)$ is the monoid generated by all matrices $M_a$ ($a \in \Sigma$)
under multiplication of boolean matrices (where scalar addition and multiplication are the boolean operators $\vee$ and $\wedge$, respectively).
For a subset $\Gamma \subseteq \Sigma$ we write 
$\Bool(\calA, \Gamma)$ for the submonoid of $\Bool(\calA)$ 
generated by all matrices $M_a$ with $a \in \Gamma$.
If $h : \Gamma^* \to \Bool(\calA, \Gamma)$ is the morphism defined by $a \mapsto M_a$ for $a \in \Gamma$ and $w \in \Gamma^*$, then the entry 
in row $i$ and column $j$ of the matrix $h(w)$ is $1$ if and only if there is a path in $\calA$ from state $i$ to state $j$ that is labelled with the word $w$.

A nondeterministic transducer $\calT$ is a nondeterministic automaton with output that reads in each 
transition  exactly one input symbol $a$, outputs a word $u$ over the output alphabet and changes the state from $p$ to $q$.
This transition is also written as $(p, a/u, q)$. 
Since $\calT$ is nondeterministic, a word $s$ is translated into a finite set of words $\calT(s)$.

\subparagraph*{Trees.}
In the following we consider \emph{finite rooted ordered trees}. Here, ordered means that the children of a node $u$ are linearly ordered (when drawing a tree
this is the usual left-to-right order). We denote this linear order by $<_u$.
 These linear orders $<_u$ induce the \emph{depth-first left-to-right order} $<_T$ on the nodes of $T$: For nodes $v_1 \neq v_2$ of $T$ we have
 $v_1 <_T v_2$ if either $v_2$ is a descendant of $v_1$ or there is a node $u$ with two children $u_1 <_u u_2$ such that $v_i$ is a 
 descendant of $u_i$ for $i \in \{1,2\}$. If $v_2$ is a descendant of $v_1$ in the tree $T$ then we write $[v_1, v_2]$ for the unique path from $v_1$ down to $v_2$.
Often, we identify $[v_1, v_2]$ with the set of nodes on this path.

\subsection{Factorization trees}
Throughout this section, we fix a finite alphabet $\Sigma$, a finite monoid $M$, and a  monoid homomorphism $h : \Sigma^* \to M$.
By $E(M) = \{ e \in M : e^2 = e\}$ we denote the set of idempotents of $M$. 
A \emph{factorization tree} for a word $s = a_1 a_2 \cdots a_n$ ($a_i \in \Sigma$) with respect to $h$ is
a finite rooted ordered tree $T$ such that the following holds:
\begin{itemize}
\item $T$ has $n$ leaves $v_1 <_T v_2 <_T \cdots <_T v_n$, where $v_i$ is labelled with $a_i$.
This allows to assign to every node $u$ of $T$ a word $w(u) \in \Sigma^*$: $w(v_i) = a_i$ and if $u$ is an internal node with children $u_1 <_u u_2 <_u \cdots <_u u_k$ then $w(u) = w(u_1) w(u_2) \cdots w(u_k)$ (hence, $w(r) = s$ if $r$ is the root of $T$).
The monoid element $h(u) \in M$ associated to node $u$ is $h(u) = h(w(u))$.
\item  Every node $u$ is either a leaf or has at least two children.
\item If $u$ has $k \geq 3$ children $u_1, u_2, \ldots, u_k$ then there is an $e \in E(M)$ such that
$h(u_i) = e$ for all $1 \le i \le k$ (and hence also $h(u) = e$). In this case we call $u$ an \emph{idempotent node}.
Nodes with exactly two children are called \emph{binary nodes}.
\end{itemize}
Figure~\ref{fact tree} shows a factorization for the word $abccabdabcabe$, where the $i$-th leaf is labelled with the $i$-th letter.
We assume that $h(ab) = h(c) = h(e)$ is idempotent.

\begin{figure}[t]
 \tikzset{alpha/.style={inner sep = 1pt, fill=white}}
 \tikzset{round/.style={inner sep = 1.5pt, circle, fill=black}}
  \tikzset{empty/.style={inner sep = 0pt}}
\centering{
\begin{tikzpicture}
\draw (0,0) node[round] (0) {} ;
\draw (0) -- ++(-2,-1)  node[round]  (1) {} 
          (0) -- ++(2,-1)  node[round]  (2) {}
          (1) -- ++(-1,-1)  node[round]  (3) {}
          (1) -- ++(1,-1)  node[alpha]  {$d$}
          (3) -- ++(-1,-1)  node[round]  (4) {}
          (3) -- ++(-.33,-1)  node[alpha]  {$c$}  
          (3) -- ++(.33,-1)  node[alpha]  {$c$} 
          (3) -- ++(1, -1)  node[round] (5) {} 
          (4) -- ++(-.5,-1)  node[alpha]  {$a$}
          (4) -- ++(.5,-1)  node[alpha]  {$b$} 
          (5) -- ++(-.5,-1)  node[alpha]  {$a$}
          (5) -- ++(.5,-1)  node[alpha]  {$b$} 
          (2) -- ++(-1,-1)  node[round]  (6) {}
          (2) -- ++(-.33,-1)  node[alpha]  {$c$}  
          (2) -- ++(.33, -1)  node[round] (7) {} 
          (2) -- ++(1,-1)  node[alpha]  {$e$} 
          (6) -- ++(-.5,-1)  node[alpha]  {$a$}
          (6) -- ++(.5,-1)  node[alpha]  {$b$} 
          (7) -- ++(-.5,-1)  node[alpha]  {$a$}
          (7) -- ++(.5,-1)  node[alpha]  {$b$} ;
\end{tikzpicture}}
\caption{A factorization tree.}  \label{fact tree}
\end{figure}

Simon \cite{Simon90} showed that every word   $s \in \Sigma^+$ has a factorization tree of height at most $9 |M|$.
The bound in the following theorem is due to Kufleitner \cite{Kuf08} and is sharp.
\begin{theorem}
Every word $s \in \Sigma^+$ has a factorization tree of height at most $3 |M|$.
\end{theorem}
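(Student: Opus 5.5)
We prove the bound by induction on Green's $\mathcal{J}$-order, following the strategy of Kufleitner's proof of the Factorization Forest Theorem. Let $\mathcal{J}$ denote Green's relation on $M$, and let the height of a one-node tree be $0$. We show a stronger, local statement. For $m \in M$, let $\mathrm{h}(m)$ be the length of the longest strict chain $m = m_0 >_{\mathcal{J}} m_1 >_{\mathcal{J}} \cdots$ in the $\mathcal{J}$-order, counted in elements of $M$ lying $\le_{\mathcal{J}}$ below $m$. \textbf{Claim:} every word $s$ with $h(s) \le_{\mathcal{J}} m$ for all its factors at the top level has a factorization tree of height at most $3\,|\{x \in M : x \le_{\mathcal{J}} m'\}|$, where $m'$ is the $\mathcal{J}$-maximal value among the $h$-images of the factors of $s$. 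Since the sets $\{x : x \le_{\mathcal{J}} m\}$ shrink strictly when we pass to a strictly lower $\mathcal{J}$-class, an induction is available. Summing the cost over the $\mathcal{J}$-classes traversed gives $3|M|$, provided we pay at most $3$ levels per element of the current $\mathcal{J}$-class $J$. Concretely, we aim for at most $3|J|$ levels, or better, $3$ times the number of $\mathcal{R}$- and $\mathcal{L}$-classes combined.

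The steps are as follows. First, cut $s$ at its maximal prefixes whose value still lies in $J$. Since $M$ is finite, for each position the value of the prefix can only move down in $\mathcal{J}$, and stability gives the following: once a prefix $u$ with $h(u)\in J$ is extended by a letter $a$ with $h(ua) \in J$, we have $h(ua)\,\mathcal{R}\,h(u)$. So along the word we decompose $s = u_1 a_1 u_2 a_2 \cdots$, where each $u_i$ has all its prefixes either strictly below $J$ or $\mathcal{R}$-related inside $J$. Second, handle words whose prefix values all stay in a single $\mathcal{R}$-class $R \subseteq J$ by induction on the number of $\mathcal{L}$-classes of $R$, using the symmetric suffix argument. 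A word $v$ all of whose "$J$-relevant" factorization blocks $v = v_0 v_1 \cdots v_k$ have $h(v_i) \in J$ and $h(v_1 \cdots v_k) \in J$ gives, by stability and Green's lemma, a sequence in a single $\mathcal{H}$-class. Third, this is where idempotents enter. If $h(v_i v_{i+1}) \in J$ for consecutive $\mathcal{H}$-related values in the same $\mathcal{H}$-class $H$, then $H$ is a group with identity $e$. Regrouping each block so that its value becomes $e$ (by attaching a binary node combining the block with a bounded prefix or suffix) produces a sequence of factors all evaluating to $e$. These can be placed under a single idempotent node, costing one level plus the binary regrouping levels. Fourth, the blocks whose values fall strictly below $J$ are handled by the induction hypothesis, which supplies their trees.

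The main obstacle is the bookkeeping that achieves the constant $3$ per element rather than Simon's $9$. We must ensure that the regrouping around the idempotent node, the prefix/suffix cuts at the $\mathcal{R}$- and $\mathcal{L}$-boundaries, and the recursive subtrees are stacked so that each element of $J$ contributes at most three levels. Our first attempt is Kufleitner's ordering. We induct on $|\{x : x \le_{\mathcal{J}} m\}|$ and within $J$ on the number of $\mathcal{R}$-classes visited. Each change of $\mathcal{R}$-class, which can happen at most $|J|$ times since the $\mathcal{R}$-classes strictly decrease along prefixes with respect to a fixed chain, costs one binary level. The remaining two levels per element pay for the $\mathcal{L}$-side cut and the idempotent node with its binary wrapper. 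Sharpness is not needed for the statement, so we only need to verify the upper bound. Alternatively, since the paper cites the result to \cite{Kuf08}, we may simply invoke that reference, and the sketch above indicates how its proof goes.
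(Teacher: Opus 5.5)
The paper gives no proof of this theorem; it is imported from Kufleitner \cite{Kuf08}. Your closing fallback of invoking that reference is therefore exactly the paper's treatment, and it is the only part of your proposal that establishes the bound. The sketch you present as indicating how that proof goes has genuine gaps, and its central step fails.

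In Step~3 you make each block whose value lies in a group $\mathcal{H}$-class $H$ with identity $e$ evaluate to $e$, by attaching a bounded prefix or suffix via a binary node. This cannot work in general. Take $M=\mathbb{Z}_n$, $\Sigma=\{a\}$ and $h(a)=1$. The only idempotent is $0$, so every child of an idempotent node spans a positive multiple of $n$ letters. A subtree of height $t$ with no idempotent node has at most $2^t$ leaves. Hence every factorization tree of $a^N$ with $N\ge n$ has height at least $\log_2 n$, and no bounded regrouping of blocks of value $1$ reaches value $0$ once $n$ is large. Since $\mathbb{Z}_n$ is a single $\mathcal{J}$-class with one $\mathcal{R}$-class and one $\mathcal{L}$-class, this also refutes your ``or better'' target of three levels per $\mathcal{R}$- and $\mathcal{L}$-class.

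The missing idea is the group argument. Fix $g\in H$ and cut the word at the positions where the running product equals $g$. The infixes between consecutive cuts then have value $e$ and become the children of one idempotent node. The leftover pieces never reach running product $g$: these are the part before the first cut, the part after the last cut, and each infix without its last letter. So one inducts on the number of running-product values still attainable. This is where a cost proportional to the number of \emph{elements}, rather than of $\mathcal{R}$- or $\mathcal{L}$-classes, comes from.

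The assembly and the accounting are not in place either.
\begin{itemize}
\item Your Step~1 uses stability, and stability implies that the $\mathcal{R}$-class of the prefix values of a word changes only through a strict drop in $\mathcal{J}$. So there are no changes of $\mathcal{R}$-class inside $J$ to be paid one binary level each, and the budget of three levels per element of $J$ is asserted rather than derived.
\item The decomposition $s=u_1a_1u_2a_2\cdots$ has unboundedly many blocks, and you never say how they are combined with bounded height. Binary nodes cannot do it. An idempotent node needs all blocks to share one idempotent value, which nothing guarantees. Applying the induction hypothesis to the sequence of blocks needs an induction statement and an additive height accounting under which the contributions of the $\mathcal{J}$-classes sum to at most $3|M|$. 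That is precisely the part you leave open.
\item Your Claim is ill-posed. The images of the nonempty factors of $s$ need not have a unique $\mathcal{J}$-maximal element, since two letters may have $\mathcal{J}$-incomparable images. ``Factors at the top level'' refers to a tree not yet built, and $\mathrm{h}(m)$ is never used.
\item The Step~2 assertion that stability and Green's lemma yield ``a sequence in a single $\mathcal{H}$-class'' is false. In the rectangular band on $\{1,2\}^2$ with $(i,j)(k,l)=(i,l)$, plus an identity, all products stay in one $\mathcal{J}$-class. Yet the word $(1,1)(2,2)$ has prefix values $(1,1)$ and $(1,2)$, which lie in the same $\mathcal{R}$-class but different $\mathcal{H}$-classes.
\end{itemize}
So either cite \cite{Kuf08} outright, as the paper does, or replace the sketch by a complete argument.
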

A factorization tree $T$ can (as any rooted ordered tree) be written as an expression $\gamma(T)$ with brackets $($ and $)$.
Formally, if $T$ consists of a single leaf that is labelled with $a \in \Sigma$ then $\gamma(T) = (a)$. Otherwise, if the root 
of $T$ has $k \geq 2$ children, let $T_i$ ($1 \le i \le k$) be the subtree of $T$ rooted in the $i$-th child of the root. Then
we have $\gamma(T) = ( \gamma(T_1) \gamma(T_2) \cdots \gamma(T_n))$.
Every subtree $T'$ of $T$ can identified with an occurrence of $\gamma(T')$ in $\gamma(T)$ in the natural way.
Since the nesting depth of brackets is bounded by the  constant $3|M|$, the set 
\[ \{ \gamma(T) : \mbox{$T$ is a factorization tree w.r.t.~$h$ of height at most $3|M|$ for some word $s \in \Sigma^+$} \}\]
is a regular language over the alphabet $\Sigma \cup \{(,)\}$. This implies the following result that is 
stated in \cite[Lemma~3]{Boj09}. 
\begin{theorem} \label{thm nondet transducer}
There is a nondeterministic 
transducer $\calT_h$  (that only depends on the monoid $M$ and the morphism $h : \Sigma^* \to M$)
such that for every word $s \in \Sigma^*$ we have $\calT_h(s) = \{ \gamma(T) : \text{$T$ is a factorization tree for $s$ w.r.t.~$h$ of height at most
$3 |M|$} \}$. 
\end{theorem}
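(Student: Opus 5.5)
We will build $\calT_h$ in two stages. First we construct a deterministic finite automaton over $\Sigma \cup \{(,)\}$ for the language $L$ of all bracket expressions $\gamma(T)$ of factorization trees $T$ w.r.t.~$h$ of height at most $H = 3|M|$. Then we turn this automaton into a transducer that guesses the brackets. Throughout, we use that $M$ and $h$ are fixed, so every bound involving $H$ and $|M|$ is a constant.

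The automaton for $L$ maintains a stack of height at most $H+1$, so its contents fit into the finite state space. Each stack entry describes one currently open node and records:
\begin{enumerate}
\item the number of completed children, capped at $3$;
\item the monoid value of the product of the completed children;
\item if the node already has at least one completed child, the value $e$ of its first child. This is needed to check the idempotent condition: once a third child appears, we require $e \in E(M)$ and every child value to equal $e$.
\end{enumerate}
Since a binary node may not later gain a third child with the wrong value, we remember per entry the list of child values only as far as needed. Concretely, we store the first two child values and a flag saying whether all children seen so far are equal to the same idempotent. The automaton processes the input as follows.
\begin{itemize}
\item On reading $($, it pushes a new entry, rejecting if the stack height would exceed $H+1$.
\item On reading a letter $a$, the current entry must have just been opened and must be followed by $)$. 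This gives the leaf form $(a)$, and the leaf value is $h(a)$.
\item On reading $)$ for an internal node, it checks that there are at least $2$ children. If there are $3$ or more, it checks that all children share the same idempotent value. It then pops the entry and feeds the node's value, as a completed child, to its parent entry.
\end{itemize}
The language is accepted when the stack is empty after the root closes. All of this state information is finite, so $L$ is regular, which is the observation made just before the theorem.

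Next we obtain the transducer from the DFA $\calD$ for $L$. Its states are those of $\calD$, and we use $\varepsilon$-free transitions by bundling brackets. For each state $p$, letter $a$, and words $u \in (^*$ and $v \in )^*$ of length at most $H+1$, we add the transition $(p, a/u a v, q)$ whenever $\calD$ leads from $p$ via $u a v$ to $q$. Initial and final states are those of $\calD$. Every $\gamma(T)$ decomposes uniquely as $u_1 a_1 v_1 u_2 a_2 v_2 \cdots u_n a_n v_n$, where the $u_i,v_i$ are bracket blocks of length bounded by the height. Hence $\calT_h(s)$ consists exactly of the words in $L$ whose projection to $\Sigma$ equals $s$, which are the $\gamma(T)$ for factorization trees of $s$.

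The empty word $s$ has no tree, and $\calT_h(\varepsilon)=\emptyset$ because initial states are not final. The main care point is the correctness of the idempotent bookkeeping in the finite stack entries, together with the bound on bracket blocks. The latter follows since a block of consecutive open (or close) brackets corresponds to a path in $T$, whose length is at most $H+1$.
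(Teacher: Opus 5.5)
Your proposal is correct and follows the paper's route. The paper only notes that bounded bracket nesting depth makes the set of all $\gamma(T)$ regular and then cites Bojanczyk's Lemma~3 for the transducer; you supply the details of both steps, namely the bounded-stack automaton and the conversion to an $\varepsilon$-free transducer that outputs the bracket block $u_i a_i v_i$ around each input letter.
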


\subsection{Monadic second-order logic over words} \label{sec-MSO}
 
 We assume that the reader is familiar with \emph{monadic second-order} (MSO) logic over words; see \cite{Str94} for more details.
 Here, a non-empty word $w = a_1 a_2 \cdots a_n \in \Sigma^+$ of length $n$ is identified with a finite structure (a so-called \emph{word structure}) consisting of the universe
 $[1,n]$ (all positions in the word $w$), the binary relation $<$ (the standard order on integers) and the unary relations $P_a = \{ i : a_i = a \}$ for $a \in \Sigma$.
 We will consider MSO-formulas over words with free first-order variables that range over positions in words. For an MSO-formula $\Phi$ and first-order variables
 $x_1, \ldots, x_k$ we write $\Phi(x_1,\ldots,x_k)$ to express that $x_1, \ldots, x_k$ are the free variables of $\Phi$.
 For a word $w \in \Sigma^+$ of length $n$, an MSO-formula  $\Phi(x_1,\ldots,x_k)$ and positions $p_1, \ldots p_k \in [1,n]$ we write
 $w \models \Phi(p_1, \ldots, p_k)$ to express that $\Phi$ is true in the word $w$ when every variable $x_i$ takes the value $p_i$.
 Moreover, we define $\valA{\Phi}_w := \{ (p_1, \ldots p_k) \in [1,n]^k : w \models \Phi(p_1, \ldots, p_k) \}$.
 
 \subsection{Polyregular functions and MSO-interpretations} \label{sec-polyreg}
 
 As already mentioned in the introduction, the class of polyregular string-to-string functions has several alternative definitions. For us, the important one
 is the one based on MSO-string-to-string interpretations \cite{BojanczykKL19}. For this, consider two finite alphabets $\Sigma$ and $\Gamma$.
 An MSO-string-to-string interpretation is a function $f : \Sigma^+ \to \Gamma^+$ for which there exist
 a $k \geq 1$ and MSO-formulas (all over words from $\Sigma^+$) $\Phi(x_1,\ldots,x_k)$, $\Phi_<(x_1,\ldots,x_k, y_1,\ldots,y_k)$, and $\Phi_b(x_1,\ldots,x_k)$ for all $b \in \Gamma$
 such that the following hold for every word $w \in \Sigma^+$ of length $n$  (we write $\bar{x}$ for $(x_1, \ldots,x_k)$ and similarly for $\bar{y}$):
 \begin{itemize}
 \item  $w \models \forall \bar{x}, \bar{y} : \Phi_<(\bar{x}, \bar{y}) \to (\Phi(\bar{x}) \wedge \Phi(\bar{y}))$
 and  $w \models \forall \bar{x} : \Phi_b(\bar{x}) \to \Phi(\bar{x})$ for every $b \in \Gamma$.
\item The structure with universe $\valA{\Phi}_w \subseteq [1,n]^k$,  the binary 
relation $\valA{\Phi_<}_w \subseteq \valA{\Phi}_w \times \valA{\Phi}_w$ and the unary relations $\valA{\Phi_b}_w \subseteq \valA{\Phi}_w$ for $b \in \Gamma$
is a word structure over the alphabet $\Gamma$ in the sense of Section~\ref{sec-MSO} and the corresponding word is $f(w)$.
 \end{itemize}
The polyregular functions are exactly the MSO-string-to-string interpretations \cite{BojanczykKL19}.
 
 \begin{example} \label{ex-lex}
 A good example for an MSO-definable linear on tuples from $[1,n]^k$ ($k \geq 1$, $n \geq 1$)
 is the lexicographic order $<_{k,n}^{\textsf{lex}}$. 
For every $k \geq 1$ there is an MSO-formula $\Phi_k$ such that for every word $w$ of length $n$, $\valA{\Phi_k}_w$ 
is $<_{k,n}^{\textsf{lex}}$. It can be inductively defined by 
$\Phi_1(x_1, y_1) = (x_1 < y_1)$ and 
$\Phi_k(x_1, \ldots, x_k, y_1, \ldots, y_k) = (x_1 < y_1 \vee (x_1 = y_1 \wedge \Phi_{k-1}(x_2, \ldots, x_k, y_2, \ldots, y_k))$
for $k \geq 2$.
\end{example}
 
\subsection{Straight-line programs} \label{sec-slp}

Let $\Sigma$ be a finite alphabet of terminal symbols.
A \emph{straight-line program} (SLP for short) over the terminal alphabet
$\Sigma$ is a context-free grammar that produces exactly one word. This can
be syntactically enforced by two properties:
\begin{itemize}
\item The context-free grammar is acyclic.
\item For every variable $A$ there is a unique production of the form $A \to w$.
\end{itemize}
Formally, we define an SLP over the terminal alphabet
$\Sigma$ as a pair $\calG = (V,\rho)$, where $V$ is a finite set of variables with $V \cap \Sigma = \emptyset$ and $\rho \colon V \to (\Sigma \cup V)^*$ 
(the \emph{right-hand side mapping}) has the property that the binary relation
$\{ (A,B) \in V \times V : B \in \alp(\rho(A)) \}$ is acyclic.
This allows to define a homomorphism $\rho^* : (V \cup \Sigma)^* \to \Sigma^*$
uniquely by setting $\rho^*(a)=a$ for $a\in \Sigma$ and $\rho^*(A) = \rho^*(\rho(A))$. 
For $A \in V$ we also write $\valGA{\calG}{A}$ for $\rho^*(A)$.
Often, an SLP has a distinguished root (or start) variable $S$. The resulting triple $\calG = (V,\rho,S)$ is then 
also called a \emph{rooted SLP} (rSLP for short) and we define $\valA{\calG} = \valGA{\calG}{S}$. 
An rSLP $\calG$ can be seen as a context-free grammar that produces 
the single string $\valA{\calG}$. 
Our convention is that capital roman letters (typically $A,B,C, S, X, Y$) denote variables from $V$, lowercase roman letters $a,b,c$ denote
terminal symbols from $\Sigma$, and greek letters $\alpha, \beta, \gamma$ denote symbols from $V \cup \Sigma$.
We define the size $|\calG|$ of the SLP $\calG = (V,\rho)$ as $\sum_{A \in V} |\rho(A)|$.

For our purpose it is convenient to require that $|\rho(A)|=2$ for all variables $A$. 
This property only excludes terminal words of length at most one
(which is not a big loss since our considerations are only interesting for long words) and can
be enforced in linear time:

\begin{lemma} \label{lemma-normalization}
A given rSLP $\calG$ in the general sense (where right-hand sides may have arbitrary length) 
such that $\valA{\calG}$ has length at least two can be transformed
in time $\bigO(|\calG|)$ into an rSLP $\calH$ such that
all right-hand sides of $\calH$ have length two and $\valA{\calH} = \valA{\calG}$. 
\end{lemma}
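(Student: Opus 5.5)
The transformation runs in three linear-time passes over $\calG = (V,\rho,S)$.

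\emph{Pass 1: remove variables deriving the empty word.} Since the grammar is acyclic, I can process the variables in a topological order, children before parents. This order is computable in time $\bigO(|\calG|)$ by depth-first search. For each variable I decide whether $\valGA{\calG}{A} = \varepsilon$, which holds iff every symbol of $\rho(A)$ is a variable already known to derive $\varepsilon$. I then delete all such variables and all their occurrences in right-hand sides. Because $|\valA{\calG}| \ge 2$, the start variable $S$ is not deleted. The new right-hand sides are nonempty, and the size does not increase.

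\emph{Pass 2: eliminate chain productions.} A variable $A$ with $|\rho(A)| = 1$ has $\rho(A) = \alpha$ for a single symbol $\alpha \in V \cup \Sigma$. Again in topological order, I compute for every variable its representative $\mathrm{rep}(A)$:
\begin{itemize}
\item if $|\rho(A)| \ge 2$, then $\mathrm{rep}(A) = A$;
\item if $\rho(A) = a$ for a terminal $a$, then $\mathrm{rep}(A) = a$;
\item if $\rho(A) = B$ for a variable $B$, then $\mathrm{rep}(A) = \mathrm{rep}(B)$.
\end{itemize}
Each case takes constant time per variable. I then replace every occurrence of a variable in a right-hand side by its representative and discard all variables with right-hand side of length one.

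The start variable needs a separate check. If $|\rho(S)| = 1$, I redefine the start as $\mathrm{rep}(S)$. This is a variable with right-hand side of length at least two, because $|\valA{\calG}| \ge 2$ rules out a terminal representative. After this pass every right-hand side has length at least two, each right-hand side has length at most its original length, and $\valA{\cdot}$ is unchanged.

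\emph{Pass 3: binarize.} For every variable with $\rho(A) = \alpha_1 \alpha_2 \cdots \alpha_m$ and $m \ge 3$, I introduce fresh variables $A_2, \ldots, A_{m-1}$ and set
\[
A \to \alpha_1 A_2, \qquad A_i \to \alpha_i A_{i+1} \ (2 \le i \le m-2), \qquad A_{m-1} \to \alpha_{m-1}\alpha_m .
\]
This produces $m-2$ new variables, each with a right-hand side of length two. The contribution of $A$ to the size therefore becomes $2(m-1) \le 2m$, so the total size stays $\bigO(|\calG|)$ and the pass runs in linear time. Acyclicity is preserved, and by induction on the chain each $A_i$ derives $\rho^*(\alpha_i \cdots \alpha_m)$, so $\valA{\calH} = \valA{\calG}$.

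The only step needing care is Pass 2, with its nested chains $A \to B \to C \to \cdots$. Resolving $\mathrm{rep}$ naively by following chains could cost quadratic time. Memoizing along the topological order makes the total cost linear.
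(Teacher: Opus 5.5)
Your proposal is correct and follows the paper's proof. Both remove $\varepsilon$-deriving variables, collapse unit productions through a memoized representative map (your $\mathrm{rep}$ is the paper's $\mathsf{chain}$), and then binarize long right-hand sides as in the Chomsky normal form construction; your topological-order treatment of cascading $\varepsilon$-deletions and your explicit handling of the start variable are slightly more careful than the paper's sketch.
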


\begin{proof}
The proof is straightforward (see \cite[Proposition~3.8]{Loh14} for a similar statement): One first eliminates all variables $B$ with $\rho(B) = \varepsilon$
by deleting every occurrence of $B$ in a right-hand side $\rho(A)$.
Only $|\calG|$ many such deletions can be done in total.
Next one eliminates variables $A$ with $\rho(A) \in V \cup \Sigma$. For this one computes in linear time the mapping $\mathsf{chain} : V \to V \cup \Sigma$ that
is defined by $\mathsf{chain}(A) = A$ if $|\rho(A)| \ge 2$, $\mathsf{chain}(A) = \mathsf{chain}(B)$ if $\rho(A)=B \in V$ and $\mathsf{chain}(A) = a$ if $\rho(A)=a \in \Sigma$.
Note that for every variable $A$, $\mathsf{chain}(A)$ is either a terminal symbol or a variable $B$ with $|\rho(B)| \geq 2$.
One then keeps only those variables $A \in V$ with $|\rho(A)| \geq 2$.
If the right-hand side of such a variable contains a variable $B$ with $\rho(B) \in V \cup \Sigma$ then we replace
this $B$ by $\mathsf{chain}(B)$. 
Finally, one eliminates right-hand sides of length at least three in the same way as it is done in the construction of the Chomsky normal form.
\end{proof}
In the following we assume that all right-hand sides of SLPs have length two
without mentioning this assumption explicitly. Let $\calG = (V,\rho)$ be such an SLP over the terminal alphabet $\Sigma$. It will be convenient 
to view $\calG$ as a dag (directed acyclic graph) $\calD(\calG) = (V \cup \Sigma,E(\calG))$ where 
the set of edges is
\[
E(\calG) = \{ (A, \ell, \alpha), (A, r, \beta) \in V \times (V\cup\Sigma) : A \in V, \rho(A) = \alpha\beta \}.
\]  
Here, $\ell$ and $r$ stand for `left' and `right', respectively.
Note that $|\calG|$ is the number of edges of the dag $\calD(\calG)$.
\emph{Paths} in $\calD(\calG)$ are edge sequences 
\[ \pi = (A_1, d_1, A_2) (A_2, d_2, A_3) \cdots (A_{k-1}, d_{k-1}, A_k)(A_k, d_k, \alpha) \in E(\calG)^+.
\]
This path is uniquely specified by $A_1$ and the sequence of directions $d_1 d_2 \cdots d_k$. Later, we generalize paths by contracting subpaths that always
move left, respectively, right. Then, it will be important to keep also the variables $A_i$ in the path description.
The length of the above path $\pi$ is $k$ (it is at least one since $\pi \in E(\calG)^+$) and $\pi$ starts in $A_1 \in V$ and ends in $\alpha \in V \cup \Sigma$.
If $\alpha \in \Sigma$ then $\pi$ is a \emph{terminal path} and 
we denote with $\Path(\calG)$ the set of all terminal paths in $\calD(\calG)$ and with 
$\Path(\calG, A) \subseteq \Path(\calG)$ the set of all terminal paths in $\calD(\calG)$ that start in $A \in V$.
If $\calG$ is clear from the context, we write $\Path(A)$ instead of $\Path(\calG, A)$.
The \emph{depth} of $\calG$, $\depth(\calG)$ for short, 
is the maximal length of a path in  $\calD(\calG)$. 

Consider an SLP $\calG = (V,\rho)$ and a variable $A \in V$ producing the word  $w = \valGA{\calG}{A}$ of length $N$.
The directions $d_i$ in paths induce a \emph{lexicographic order} on the set of paths $\Path(A)$, where $\ell < r$.
Then, the function $\pos_{A} : \Path(A) \to [1,N]$ that maps the lexicographically $i$-th path in $\Path(A)$
to position $i$ is bijective. 

For a subset $V' \subseteq V$ of variables we define the subSLP $\calG\rest_{V'}$ as the 
SLP $(V', \rho\rest_{V'})$. The terminal alphabet of $\calG\rest_{V'}$ is 
$\{ \alpha \in V \cup \Sigma : \alpha \in \alp(\rho(A)) \text{ for some } A \in V' \} \setminus V'$.

\begin{example} \label{ex slp}
Consider the SLP $\calG = (\{S,A,B,C,D\}, \rho)$ over the terminal alphabet $ \{a,b\}$, where $\rho$ is given by
$$
\rho(S)= AB, \  \rho(A)= BC, \ \rho(B)= CC, \ \rho(C) = aD, \ \rho(D) = ab .
$$
We have $\valGA{\calG}{S} = (aab)^5$, the size of the SLP is 10 and its depth is 5. 

The dag $\calD(\calG)$ is shown in Figure~\ref{fig:dt}.
For the path 
\[ \pi = (S,\ell, A)(A,\ell,B)(B,r,C)(C,\ell,a) \in \Path(S) \]
we have $\pos_S(\pi) = 4$.
Its lexicographic predecessor in $\Path(S)$ is 
\[ \pi' = (S,\ell, A)(A,\ell,B)(B,\ell,C)(C,r,D)(D,r,b),
\] 
whereas its 
lexicographic successor in $\Path(S)$ is 
\[ \pi'' = (S,\ell, A)(A,\ell,B)(B,r,C)(C,r,D)(D,\ell,a).
\]
In particular, we have $\pos_S(\pi') = 3$ and $\pos_S(\pi'') = 5$.

The terminal alphabet of the SLP $\calG\rest_{\{S,A,B\}}$ contains only $C$ and it produces $C^5$.
\end{example}
 
 \begin{figure}
\tikzset{alpha/.style={inner sep = 1pt, fill=white}}
\tikzset{round/.style={inner sep = 1.2pt, circle, fill=black}}
\centering{
\begin{tikzpicture}
\node[alpha] (S) {$S$};
\node[alpha, below = .6cm of S] (A) {$A$};
\node[alpha, below = .6cm of A] (B) {$B$};
\node[alpha, below = .6cm of B] (C) {$C$};
\node[alpha, below = .6cm of C] (D) {$D$};
\node[alpha, below left = .6cm and .2cm of D] (a) {$a$};
\node[alpha, below right = .6cm and .2cm of D] (b) {$b$};

\draw[->] (S) to[bend right=30] node[left = -.6mm]{$\scriptstyle{\ell}$} (A);
\draw[->] (S) to[bend left=60] node[right = -.6mm]{$\scriptstyle{r}$} (B);
\draw[->] (A) to[bend right=30] node[left = -.6mm]{$\scriptstyle{\ell}$}(B);
\draw[->] (A) to[bend left=60] node[right = -.6mm]{$\scriptstyle{r}$} (C);
\draw[->] (B) to[bend right=30] node[left = -.6mm]{$\scriptstyle{\ell}$} (C);
\draw[->] (B) to[bend left=30] node[right = -.6mm]{$\scriptstyle{r}$} (C);
\draw[->] (C) to[bend right=30] node[left = -.6mm]{$\scriptstyle{\ell}$} (a);
\draw[->] (C) to[bend left=30] node[right = -.6mm]{$\scriptstyle{r}$} (D);
\draw[->] (D) to[bend right=0] node[left = -.6mm]{$\scriptstyle{\ell}$} (a);
\draw[->] (D) to[bend left=0] node[right = -.6mm]{$\scriptstyle{r}$} (b);
\end{tikzpicture}}
\caption{The dag $\calD(\calG)$ for the SLP from Example~\ref{ex slp}.}\label{fig:dt}
\end{figure}

 \subsection{Enumeration algorithms}
 
 We use the standard RAM model for algorithms.
 We consider enumeration algorithms where the output values are enumerated in a specific order.
 This can be formalized as follows. 
 A \emph{ranked enumeration problem} is a function $E$ that maps an input $I$ to a finite word $E(I) \in \Omega^*$
over an alphabet $\Omega$ of output values. In general, the alphabet $\Omega$ may depend on the input $I$. An important
restriction for us is that every element $a \in\Omega$ fits into a constant number of RAM registers, where the bit length of 
RAM registers depends on the input length $|I|$ in a certain way (more about this later).
An enumeration algorithm $A$ for $E$ is an algorithm that computes on input $I$ the word $E(I)$ from left to right. More precisely,
if $E(I) = a_1 a_2 \cdots a_m$ then the algorithm only starts with
  the computation of $a_{i+1}$, once it finishes outputting $a_i$. After the computation of $a_m$ the algorithm outputs the 
 special end symbol $a_{m+1} = \mathsf{end}$.

 The \emph{preprocessing time} of $A$ on input $I$ is the time when the algorithm starts with outputting  $a_1$.
The preprocessing time  of $A$ is the function that maps an $n \in \mathbb{N}$ to the 
 maximum preprocessing time of $A$ over all possible inputs $I$ of length at most $n$.
The \emph{delay} of  $A$ on input $I$ is the  maximal time between the computation of two consecutive output values $a_i$ and $a_{i+1}$,
where $E(I) = a_1 a_2 \cdots a_m$, $a_{m+1} = \mathsf{end}$ and $i \in [1,m]$. 
The algorithm $A$ works in \emph{constant delay} if there is a constant $d$ such that for every input $I$ the delay is bounded by $d$.
The gold standard in the area of enumeration algorithms is linear preprocessing (i.e., the preprocessing time is $\bigO(|I|)$) and 
constant delay. 

We consider enumeration algorithms, where the input $I$ is an rSLP $\calG$. Assume that $N$ is the length of $\valA{\calG}$.
We assume that positions from the interval $[1,N]$ fit into single RAM registers. This is a standard assumption in the 
area of algorithmics on grammar-compressed objects; see e.g. \cite{BLRSSW15,GanardiJL21,MunozRiveros2025}.

\section{SLP-compressed factorization trees} \label{sec SSLP}

Let $M$ be a finite monoid, $h : \Sigma^* \to M$ a homomorphism
and $\calG = (V,\rho)$ an SLP over the  terminal alphabet $\Sigma$. 
Constants that are hidden in the $\bigO$-notation below depend only on the monoid $M$ and the morphism $h$ in the following.
We extend $h$ to $h : (\Sigma \cup V)^* \to M$
by defining $h(A) = h(\rho^*(A))$ for $A \in V$.
For an idempotent $e \in M$ we define the \emph{$e$-part} $\calG_e := \calG\rest_{V_e}$, where
\[ V_e = \{ A \in V : h(A) = e, \rho(A) = \alpha\beta \text{ with } h(\alpha) = h(\beta) = e \}.\]
Let $\Sigma_e$ be the set of terminal symbols of $\calG_e$; it may contain variables and terminals of $\calG$.
Note that $h(\alpha) = e$ for every $\alpha \in V_e \cup \Sigma_e$. Hence, 
the sets $V_e \cup \Sigma_e$ and $V_{e'} \cup \Sigma_{e'}$ are disjoint for idempotents $e \neq e'$.
We define the \emph{idempotent-contracted depth} of $\calG$
as follows: Consider the dag $\calD(\calG)$ and a path $\pi = (A_1, d_1, A_2) \cdots (A_{k-1}, d_{k-1}, A_k)(A_k, d_k, a) \in \Path(A_1)$
 in $\calD(\calG)$
with $A_i \in V$ ($i \in [1,k]$) and $a \in\Sigma$. To simplify notation we write $A_{k+1}$ for $a$.
The  \emph{idempotent-contracted length} of $\pi$ is the length of 
the path obtained from $\pi$ by contracting all maximal subpaths of $\pi$ that are paths in
some $\calD(\calG_e)$ into single edges. It can be also defined as the number of $i \in [1,k]$ such that either $A_i \notin \bigcup_{e \in E(M)} V_e$ 
or $A_i \in V_e$ and $A_{i+1} \in \Sigma_e$ for some $e \in E(M)$. 
 Figure~\ref{fig:id-contract} shows a path of length 9 with idempotent-contracted length 7.
  The idempotent-contracted depth of $\calG$ is the maximal idempotent-contracted length of a path in $\calD(\calG)$.

 \begin{figure}
\tikzset{alpha/.style={inner sep = 1pt, fill=white}}
\tikzset{round/.style={inner sep = 1.2pt, circle, fill=black}}
\centering{
\begin{tikzpicture}
\node[alpha] (1) {$A_1$};
\node[alpha, right = .6cm of 1] (2) {$A_2$};
\node[alpha, right = .6cm of 2,label = above:$V_e$] (3) {$A_3$};
\node[alpha, right = .6cm of 3,label = above:$V_e$] (4) {$A_4$};
\node[alpha, right = .6cm of 4,label = above:$\Sigma_e$] (5) {$A_5$};
\node[alpha, right = .6cm of 5] (6) {$A_6$};
\node[alpha, right = .6cm of 6,label = above:$V_i$] (7) {$A_7$};
\node[alpha, right = .6cm of 7,label = above:$V_i$] (8) {$A_8$};
\node[alpha, right = .6cm of 8,label = above:$\Sigma_i$] (9) {$A_9$};
\node[alpha, right = .6cm of 9] (10) {$a$};

\draw[->,red] (1) to node[above = -.6mm]{$\scriptstyle{\ell}$} (2);
\draw[->,red] (2) to node[above = -.6mm]{$\scriptstyle{\ell}$} (3);
\draw[->] (3) to node[above = -.6mm]{$\scriptstyle{r}$} (4);
\draw[->,red] (3) to[bend right=40] (5);
\draw[->] (4) to node[above = -.6mm]{$\scriptstyle{\ell}$} (5);
\draw[->,red] (5) to node[above = -.6mm]{$\scriptstyle{r}$} (6);
\draw[->,red] (6) to node[above = -.6mm]{$\scriptstyle{r}$} (7);
\draw[->] (7) to node[above = -.6mm]{$\scriptstyle{r}$} (8);
\draw[->,red] (7) to[bend right=40] (9);
\draw[->] (8) to node[above = -.6mm]{$\scriptstyle{\ell}$} (9);
\draw[->,red] (9) to node[above = -.6mm]{$\scriptstyle{\ell}$} (10);
\end{tikzpicture}}
\caption{The idempotent-contracted length of a path.}\label{fig:id-contract}
\end{figure}

A \emph{Simon SLP} (SSLP for short) with respect to the homomorphism $h$ is an SLP such that every path in $\calD(\calG)$
has idempotent-contracted length at most $3 |M|$. If $h$ is clear from the context we simply speak of an SSLP.
A rooted SSLP is called an rSSLP.

\begin{theorem} \label{thm-Simon-SLP}
Fix a finite monoid $M$ and $h : \Sigma^* \to M$ as above.
From a given rSLP $\calG$ one can construct in time $\bigO(|\calG|)$ an rSSLP $\calG'$ of size $\bigO(|\calG|)$ 
such that $\valA{\calG'} = \valA{\calG}$. 
\end{theorem}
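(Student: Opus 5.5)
We would prove Theorem~\ref{thm-Simon-SLP} by building, bottom-up over $\calG$, an SLP that compresses a factorization tree of bounded height for every variable, using the transducer $\calT_h$ from Theorem~\ref{thm nondet transducer} only as a guide to which local choices are consistent. A direct attempt to lift one factorization tree of $\valA{\calG}$ is hopeless, because that tree has about $N$ nodes. Instead we use that the factorization trees produced by the standard (Simon/Kufleitner) construction are essentially determined by local data, and we mimic that construction on the grammar.

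\textbf{Step 1: induction along the lattice of $\mathcal{J}$-classes.} We follow Kufleitner's proof of the $3|M|$ bound, which proceeds by induction on $|M|$, or equivalently on the $\mathcal{J}$-order. For a word over $\Sigma$ one splits into blocks. The inductive invariant on the grammar side will be the following.
\begin{quote}
For every variable $A$ of the current SLP there are new variables whose derivation tree is a factorization tree for $\valGA{\calG}{A}$, up to a constant number of ``border'' pieces. These border pieces are a bounded-length prefix and suffix decomposition that are not yet combined.
\end{quote}
Concretely, each $A$ with $\rho(A)=BC$ is processed after $B$ and $C$. We already have bounded-height compressed trees for $B$ and $C$, each presented as a bounded sequence of ``top-level'' blocks. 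The content of $A$ is the concatenation of these two sequences. We re-factor only the boundary, namely the suffix blocks of $B$ and the prefix blocks of $C$, which costs $\bigO(1)$ new variables. Each interior block is reused as an existing variable.

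\textbf{Step 2: handling long idempotent runs.} In the uncompressed proof, the only unbounded-arity nodes arise from runs of consecutive factors that all map to the same idempotent $e$. Such a run is a product $X_1\cdots X_m$ with $h(X_i)=e$, and we represent it with a balanced-or-arbitrary binary SLP in which every internal variable $Y$ has $\rho(Y)=\alpha\beta$ with $h(\alpha)=h(\beta)=e$. By definition all those internal variables lie in $V_e$, so the whole run contributes $1$ to the idempotent-contracted length. For this reason the construction may produce arbitrary binary structure inside idempotent runs without damaging the depth bound. When $A=BC$ merges a trailing $e$-run of $B$ with a leading $e$-run of $C$, we simply create one new variable $Z\to R_BL_C$. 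This variable is again in $V_e$, so merging runs costs $\bigO(1)$ per variable of $\calG$. This gives total size $\bigO(|\calG|)$ and linear time, since all work per variable depends only on $M$ and $h$.

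\textbf{Step 3: depth bound and the main obstacle.} We verify that every path in the new dag crosses, apart from contracted $e$-segments, at most the number of levels in Kufleitner's induction. This yields idempotent-contracted length at most $3|M|$; if our bookkeeping is lossy, we accept a bound $c|M|$ and then adjust. The hard part is making the boundary re-factorization genuinely constant-size. Merging two sequences of top-level blocks may trigger a cascade, because combining the last block of $B$ with the first of $C$ can change $\mathcal{J}$-class structure at several levels. We must ensure that only a bounded number of blocks on each side is affected at each of the at most $|M|$ levels. We would first try to prove this via a ``left-/right-border'' normal form. At each level of the induction, a word's decomposition is determined except for its first and last block, so that only the $\bigO(|M|)$ border blocks need to be stored per variable, as a list of pointers, and recombined.
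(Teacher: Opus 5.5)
\textbf{There is a genuine gap.} Your Step~3 is not a bookkeeping detail. It is the entire content of the theorem, and the invariant of Step~1 that it is supposed to maintain cannot hold as you state it.

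You keep a single, context-independent representation for each variable $A$ of $\calG$ and reuse its interior blocks verbatim in every occurrence of $A$. But the part of a bounded-height factorization tree that lies inside an occurrence of $A$ depends on what surrounds that occurrence. The Simon/Kufleitner existence proofs are not compositional in the way you need.

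A small example shows the problem. Take $M=\mathbb{Z}/3\mathbb{Z}$ with $h(g)=1$, so $0$ is the only idempotent, and $A_1\to gg$, $A_{j+1}\to A_jA_j$. Since $|A_j|=2^j\not\equiv 0 \pmod 3$, every representation of $A_j$ contains a block of nonzero image. The two copies of $A_j$ inside $A_{j+1}$ start at different residues mod $3$, so no single fixed alignment of the $0$-blocks inside $A_j$ suits both occurrences.

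The nonzero blocks double at each level. To keep their number bounded you must repeatedly put them under common non-idempotent nodes. Each such node raises the contracted depth above a previously built nonzero block, or else sits above an inherited run of $0$-blocks. The only alternative is to re-align that run, which is a global re-pairing and not a boundary operation. This recurs at every level, so the idempotent-contracted depth of what you build grows with $j$.

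Your Step~2 (merging adjacent $e$-runs via $Z\to R_BL_C$) is fine, but it never touches this case. Also, a bound $c|M|$ would not meet the definition of an SSLP, which requires $3|M|$.

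\textbf{The missing idea is the one you dismissed at the start.} You never need the $N$-node tree itself, only its bracket word $\gamma(T)$, and that word has an SLP of size $\bigO(|\calG|)$. The paper proceeds as follows.
\begin{enumerate}
\item It takes the product of $\calT_h$ with $\calG$, with variables $(p,A,q)$. This gives one copy of $A$ per pair of transducer states, which is exactly the finite context information your invariant lacks.
\item It removes unproductive variables and fixes one production per remaining variable. Run segments of $\calT_h$ between the same pair of states are interchangeable, so this yields an rSLP $\calH'$ for $\gamma(T)$, where $T$ is one global factorization tree of height at most $3|M|$.
\item Every variable of $\calH'$ derives a factor of $\gamma(T)$ with Dyck factorization $u_1)\cdots u_k)\,v\,(w_\ell\cdots(w_1$, where $k,\ell\le 3|M|$. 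Concatenation is resolved by matching brackets.
\end{enumerate}
Step~3 is your boundary re-factoring, but now dictated by the fixed $T$. There are no choices and no unbounded cascade: each variable of $\calH'$ contributes only $\bigO(|M|)$ new variables. Each new variable corresponds either to a binary node of $T$ or to consecutive children of an idempotent node of $T$, which lies in the corresponding $e$-part. This directly gives idempotent-contracted depth at most $3|M|$.
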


\begin{proof}
Recall the nondeterministic transducer $\calT_h$ from Theorem~\ref{thm nondet transducer}.
Let $\calG = (V, \rho, S)$.
In a first step, we compute an acyclic context-free grammar $\calH$ that generates the finite set
$\calT_h(\valA{\calG})$. This is done by the standard product construction applied to the transducer 
$\calT_h$ and the rSLP $\calG$ (viewed as a context-free grammar). 
The variables of $\calH$ are triples $(p,\alpha,q)$, where $\alpha \in V \cup \Sigma$ 
and $p$ and $q$ are states of $\calT_h$. If $\rho(A) = \alpha \beta$ then
for all states $p,q,r$ of $\calT_h$ we have the production $(p,A,r) \to (p,\alpha,q)(q,\beta,r)$ in $\calH$.
Moreover, we add all productions $(p, a, q) \to w$, where $(p, a/w, q)$ is a transition of $\calT_h$. 
If $m$ is the number of states of $\calT_h$ and $t$ is the sum of the lengths of the output words in the transitions of $\calT_h$
 (both are constants in our setting) then 
the size of $\calH$ is bounded by $m^3 |\calG|+t$.

In a second step, we reduce $\calH$ by removing variables that do not derive a terminal word. The well know algorithm from 
\cite[Section~7.1.1]{HoUl79} achieves this in linear time. After this step, we can fix for every variable $A$ an arbitrary
production $A \to s$ and define an SLP by setting $\rho(A) = s$. The resulting rSLP $\calH'$ produces
a word $\gamma(T)$, where $T$ is a factorization tree for $\valA{\calG}$ of height at most $3|M|$. By  Lemma~\ref{lemma-normalization} we can assume
that all right-hand sides of $\calH'$ have length two.

Finally, we restructure the SLP $\calH'$ into an rSSLP $\calG'$ for $\valA{\calG}$.
Every factor $u$ of  the word $\valA{\calH'} = \gamma(T)$ can be uniquely written as
\begin{equation} \label{eq-factor-gamma(T)}
u_1 \, ) \, u_2 \, )  \cdots   u_ k \, ) \, v \, ( \, w_\ell \,  ( \, w_{\ell-1}  \cdots ( \, w_1
\end{equation}
with $k, \ell \leq 3|M|$.
Every word $u_i$, $v$, $w_i$ is either empty, $\gamma(T')$ for 
a subtree $T'$ of $T$ or a sequence $\gamma(T_1) \gamma(T_2) \cdots \gamma(T_j)$ for subtrees $T_1, T_2, \ldots, T_j$ 
whose roots form a consecutive sequence of children of an idempotent node of $T$.
We call \eqref{eq-factor-gamma(T)} the \emph{Dyck factorization} of $u$.

Assume now that a variable $A$ of $\calH'$ derives the word \eqref{eq-factor-gamma(T)}. We then introduce in the final
rSSLP $\calG'$ variables that derive the projections of the words  $u_1, u_2, \ldots, u_ k, v, w_\ell, w_{\ell-1}, \ldots, w_1$ to the alphabet $\Sigma$
 (i.e., the words that are obtained by removing the brackets from $u_1, u_2, \ldots, u_ k, v, w_\ell, w_{\ell-1}, \ldots, w_1$).
 Of courses, it suffices to introduce variables only for those words that are non-empty.
We compute the right-hand side mapping $\rho'$ of $\calG'$ by a bottom-up process. 

Let us consider an example, where $\rho_{\calH'}(A) = BC$ with $B$ and $C$ variables of $\calH'$ (the case
where one of them or both are terminal symbols is easier). Assume that the Dyck factorizations of
the factors $\valGA{\calH'}{B}$ and $\valGA{\calH'}{C}$ of $\gamma(T)$ are
\begin{eqnarray}   
\valGA{\calH'}{B}  & = & r_1 \, ) \, r_2 \, )   \, s \, ( \, t_4 \,  ( \, t_3 \,  ( \, t_2 \,  ( \, t_1, \label{dyck-B} \\
\valGA{\calH'}{C}  & = & x_1 \, ) \, x_2 \, ) \, x_3 \, ) \, y \, ( \, z_2 \,  ( \, z_1. \label{dyck-C}
\end{eqnarray}
Assume that all factors $r_i, s, t_i, x_i, y, z_i$ are nonempty.
Then we have already introduced variables 
$R_1, R_2, S, T_4,  T_3, T_2, T_1$ for the variable $B$
and $X_1, X_2, X_3, Y, Z_2, Z_1$ for $C$. Here the variable $R_i$ derives the $\Sigma$-projection of $r_i$, $S$ derives the  $\Sigma$-projection of  $s$, etc.

The Dyck factorization of $\valGA{\calH'}{A} = \valGA{\calH'}{B}\valGA{\calH'}{C}$ can be derived from the concatenation of the Dyck factorizations in \eqref{dyck-B} and  \eqref{dyck-C}
by pairing matching brackets:
\begin{alignat*}{3} 
 & & r_1 \, ) \, r_2 \, )   \, s \, ( \, t_4 \,  ( \, t_3 \,  ( \, t_2 \,  \textcolor{blue}{( \, t_1}  \!&\,  \,    \textcolor{blue}{x_1 \, )} \, x_2 \, ) \, x_3 \, ) \, y \, ( \, z_2 \,  ( \, z_1 & & \\   
= & & r_1 \, ) \, r_2 \, )   \, s \, ( \, t_4 \,  ( \, t_3 \,  \textcolor{blue}{( \, t_2} \, &   \textcolor{blue}{p_1 \, x_2 \, )} \, x_3 \, ) \, y \, ( \, z_2 \,  ( \, z_1 & & \text{with } p_1 \, = \, ( \, t_1  \, x_1  \, ) \\
= &  & r_1 \, ) \, r_2 \, )   \, s \, ( \, t_4 \,  \textcolor{blue}{( \, t_3} \,  & \textcolor{blue}{p_2} \, \textcolor{blue}{x_3 \, )} \, y \, ( \, z_2 \,  ( \, z_1 & & \text{with } p_2  \, =  \, ( \, t_2  \, p_1  \, x_2  \, ) \\
= &  & r_1 \, ) \, r_2 \, )   \, s \, ( \, \textcolor{blue}{t_4} \,   & \textcolor{blue}{p_3} \, \textcolor{blue}{y} \, ( \, z_2 \,  ( \, z_1 & & \text{with } p_3  \, =  \, ( \, t_3  \, p_2  \, x_3  \,) \\
= &  & r_1 \, ) \, r_2 \, )   \, s \, (  \, & \textcolor{blue}{z_3} \, ( \, z_2 \,  ( \, z_1 & &  \text{with } z_3  \, =  \, t_4 \, p_3 \, y  
\end{alignat*}
We then introduce for $A$ the new variables $P_1, P_2, P_3, Z_3$
with the right-hand sides
\[ \rho'(P_1) = T_1 X_1, \ \rho'(P_2) = T_2 P_1 X_2, \ \rho'(P_3) = T_3 P_2 X_3, \  \rho'(P_4) = T_4 P_3 Y . \]
The variables $R_1, R_2, S, Z_2, Z_1$ can be reused for $A$. Note that the number of new variables introduced
for each variable of $\calH'$ is linearly bounded in the height of the factorization tree $T$; hence it is $\bigO(|M|)$.

At the end we have to apply Lemma~\ref{lemma-normalization} to reduce the length of right-hand sides to two.
It is easy to see that this construction yields an rSSLP. For every variable $A$ of 
$\calG'$ with $\rho'(A) = \alpha_1 \alpha_2$ exactly one of the following two cases holds:
\begin{itemize}
\item There is a binary node $v$ of $T$ with left child $v_1$ and right child $v_2$ such that
$\alpha_1$ produces the word $w(v_1)$ and $\alpha_2$ produces the word $w(v_2)$ (and hence $A$ produces $w(v)$).
\item There is an idempotent node $v$ of $T$ and consecutive children $v_1 <_v v_2 <_v \cdots <_v v_k$ of $v$ (these
are not necessarily all children of $v$) such that $A$ produces $w(v_1) \cdots w(v_k)$, $\alpha_1$ produces $w(v_1) \cdots w(v_i)$ and 
$\alpha_2$ produces $w(v_{i+1}) \cdots w(v_k)$ for some $k \in [1,k-1]$.
\end{itemize}
This ensures that $\calG'$ is indeed an rSSLP.
\end{proof}

\section{Constant time traversal in SLP-compressed words} \label{sec-traversal}

In order to exploit Simon SLPs algorithmically we need a 
 technique from \cite{LohreyMR18} for two-way constant time traversal in SLP-compressed words.
Later, we introduce a generalization of this technique.

Throughout this section we 
fix an SLP $\calG = (V,\rho)$ over the terminal alphabet $\Sigma$.
For a variable $A \in V$ we write $|A|_{\calG}$ 
for the length of the word $\valGA{\calG}{A}$.
Recall from Section~\ref{sec-slp}  the bijection $\pos_A : \Path(A) \to [1,|A|_{\calG}]$ between 
paths in $\calD(\calG)$ that start in $A$ and end in a terminal symbol and the positions in $[1, |A|_{\calG}]$.
We succinctly represent a path $\pi \in \Path(A)$ by merging successive edges where $\pi$ moves in the same direction
(left or right) towards the leaf.
To formalize this idea, we define for every $\alpha\in V \cup \Sigma$ the strings
$L(\alpha), R(\alpha) \in V^*\Sigma$
inductively as follows. For $a \in \Sigma$ let
\begin{gather*}
L(a)  =  R(a) = a.
\end{gather*}
For $A \in V$ with $\rho(A) = \alpha\beta$ ($\alpha, \beta \in V \cup \Sigma$) let
\begin{equation} \label{L-and-R}
L(A)  =  A \, L(\alpha) \text{ and }
R(A)  = A \, R(\beta) .
\end{equation}
 If $L(\alpha)$ (with $\alpha \in V \cup \Sigma$) ends with the terminal symbol $a \in \Sigma$ then we define $\omega_L(\alpha) = a$ (in particular, $\omega_L(a)=a$).
The terminal $\omega_R(\alpha) \in \Sigma$ is defined
analogously by referring to the string $R(\alpha)$.

\begin{example} \label{ex-traversal-string}
Consider the SLP from Example~\ref{ex slp}.
We have $L(a) = R(a)=a$, $L(b)=R(b) = b$, and 
\begin{alignat*}{2}
L(S) & \ = \ SABCa  \qquad \qquad & R(S) & \ = \ SBCDb \\
L(A) & \ = \ ABCa  \qquad \qquad & R(A)& \ = \ ACDb \\
L(B) & \ = \ BCa  \qquad \qquad & R(B) & \ = \ BCDb \\
L(C) & \ = \ Ca \qquad \qquad & R(C) & \ = \ CDb \\
L(D) & \ = \ Da \qquad \qquad & R(D) & \ = \ Db.
\end{alignat*}
Moreover, $\omega_L(X) = a$ and $\omega_R(X) = b$ for all $X \in \{S,A,B,C,D\}$.
\end{example}
We store all strings $L(\alpha)$ (for $\alpha \in V \cup \Sigma$) in $|\Sigma|$ many
tries: Fix $a \in \Sigma$ and let $w_1, \ldots, w_n$ be all strings
$L(\alpha)$ such that $\omega_L(\alpha) = a$ (in particular, $a$ is one of the $w_i$). Let $v_i$ be the string $w_i$ reversed.
Then, $P = \{ v_1, \ldots, v_n \}$ is a prefix-closed set of strings (except that the empty
string is missing) that can be stored
in a trie $T_L(a)$. Formally, $P$ is the set of nodes of $T_L(a)$, each node $s \in P$ is
labeled by its last symbol (so the root is labeled with $a$),
and  there is an edge from $a w$ to $a w A$ for all $w \in V^*$, $A \in V$ with $a w A \in P$.
The tries $T_R(a)$ are defined in the same way by referring to the strings $R(A)$.
Note that the total number of nodes in all tries $T_L(a)$ ($a \in \Sigma$) is exactly $|V|+|\Sigma|$.
In fact, every $\alpha \in V \cup \Sigma$ occurs exactly once as a node label in the
forest $\{ T_L(a) : a \in \Sigma\}$.

\forestset{suffixtree/.style={for tree={edge=->}}}
\begin{figure}[t]
\begin{center}
\begin{forest}
suffixtree,
phantom[
  [$a$ [$C$ [$B$ [$A$[ $S$]]]] [$D$]]
  [$b$]
]
\end{forest}
\qquad
\begin{forest}
suffixtree,
phantom[
  [$a$]
  [$b$ [$D$ [$C$ [$A$] [$B$ [$S$]]]]]
]
\end{forest}
\end{center}
\caption{\label{fig-ex-string} The left shows the tries $T_L(a)$, $T_L(b)$ and the right shows
$T_R(a)$, $T_R(b)$, for the SLP from Example~\ref{ex slp}.}
\end{figure}

\begin{example}
The tries $T_L(a)$, $T_L(b)$, $T_R(a)$, and $T_R(b)$ for the SLP from Example~\ref{ex slp} are shown
in Figure~\ref{fig-ex-string}.
\end{example}
Next, we define two alphabets $\sfL$ and $\sfR$ by
\begin{align}
\sfL = \ & \{ (A,\ell,\alpha) : \alpha \in \alp(L(A)) \setminus \{A\} \} \label{alph-L}, \\
\sfR = \ & \{ (A,r,\beta) : \beta \in \alp(R(A)) \setminus \{A\} \}. \label{alph-R}
\end{align}
Note that the sizes $|\sfL|$ and $|\sfR|$ are quadratic in the size
of $\calG$. 
In order to avoid some case distinctions in the rest of the section, we also allow triples $(\alpha, d, \alpha)$ 
with $\alpha \in V \cup \Sigma$ and $d \in \{\ell,r\}$ but they are placeholders for the empty word $\varepsilon$ (in particular $(\alpha, \ell, \alpha)$
does not belong to $\sfL$ and similarly for $\sfR$).

On the alphabets $\sfL$ and $\sfR$ we define the functions $\lreduce : \sfL \to \sfL \cup \{\varepsilon\}$ and $\rreduce : \sfR \to \sfR \cup \{\varepsilon\}$ as follows:
let $(A,\ell,\alpha) \in \sfL$ and let $B$ be the unique variable that appears directly to the left of 
$\alpha$ in the string $L(A)$. Then we define $\lreduce(A,\ell,\alpha) = (A,\ell,B)$. Note that by our conventions,
this is $\varepsilon$ in case $A=B$ (i.e., $\alpha$ is the left symbol in $\rho(A)$).
The definition of $\rreduce$ is analogous: If $(A,r,\alpha) \in \sfR$,
then $\rreduce(A,r,\alpha) = (A,r,B)$ where $B$ is the unique variable that appears directly to the left of 
$\alpha$ in the string $R(A)$.

\begin{example} 
For  the SLP from Example~\ref{ex slp} the sets $\sfL$ and $\sfR$ are
\begin{align*}
\sfL = \{ & (S,\ell,A), (S, \ell,B), (S,\ell,C), (S,\ell,a), (A,\ell,B), \\
& (A,\ell,C), (A,\ell,a), (B,\ell,C), (B,\ell,a), (C,\ell,a), (D,\ell,a) \}, \\
\sfR = \{ & (S,r,B), (S,r,C), (S,r,D), (S,r,b), (A,r,C), (A,r,D), \\
& (A,r,b), (B,r,C), (B,r,D), (B,r,b), (C,r,D), (C,r,b),  (D,r,b) \}.
\end{align*}
We have $\lreduce(S,\ell,a) = (S,\ell,C)$, $\rreduce(B,r,D) = (B,r,C)$, and
 $\lreduce(S,\ell,A) = \varepsilon$.
\end{example}
An element $(A,\ell,\alpha)$ can be represented by a pair $(v_1, v_2)$ of different nodes in the forest
$\{ T_L(a) : a \in \Sigma \}$, where $v_1$ (resp. $v_2$) is the unique node labeled with
$\alpha$ (resp., $A$). Note that $v_1$ and $v_2$ belong to the same trie and $v_2$ is strictly below
$v_1$.
This observation allows us to reduce the computation of the mapping $\lreduce$ to
a so-called \emph{next link query}:
From the pair $(v_1, v_2)$ we have to compute the unique child $v$ of $v_1$
such that $v$ is on the path from $v_1$ down to $v_2$. If  $v$ is labeled with $B$, then
$\lreduce(A,\ell,\alpha) = (A,\ell,B)$ (which is $\varepsilon$ if $B=A$). We represent $(A,\ell,B)$ by the pair $(v,v_2)$.
Clearly, the same remark applies to the map $\rreduce$.
The following result is mentioned in~\cite{GasieniecKPS05}.

\begin{proposition}\label{prop:gas}
A trie $T$ can be represented in space $\bigO(|T|)$ such that any next link query can be answered in time
$O(1)$. Moreover, this representation can be computed in time $\bigO(|T|)$ from $T$.
\end{proposition}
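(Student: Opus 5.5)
The statement is Proposition~\ref{prop:gas}, the next link query result for tries attributed to~\cite{GasieniecKPS05}. I would reduce a next link query to a \emph{level ancestor query}. Given nodes $v_1, v_2$ of $T$ with $v_2$ strictly below $v_1$, the desired node $v$ is the unique child of $v_1$ on the path $[v_1,v_2]$. Equivalently, $v$ is the ancestor of $v_2$ at depth $\depth(v_1)+1$. So the plan is:
\begin{itemize}
\item preprocess $T$ in linear time, storing for every node its depth;
\item build a level ancestor structure $\mathsf{LA}(u,d)$ that returns, in constant time, the ancestor of $u$ at depth $d$;
\item answer a next link query as $\mathsf{LA}(v_2, \depth(v_1)+1)$.
\end{itemize}
Depths are computed by a single traversal of $T$ in time $\bigO(|T|)$.

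The main work is the linear-space, linear-time, constant-query level ancestor structure, which I would take from the standard construction of Bender and Farach-Colton (or Berkman--Vishkin, or Dietz). Its outline is as follows.
\begin{enumerate}
\item Apply a \emph{ladder decomposition}. Decompose $T$ into long paths by repeatedly taking a longest root-to-leaf path. Extend each path of length $k$ upward by up to $k$ nodes to form a ladder. The total size stays $\bigO(|T|)$.
\item Add \emph{jump pointers}, but only at a set of $\bigO(|T|/\log|T|)$ macro nodes, namely the leaves of the tree pruned below subtrees of size about $\tfrac14\log|T|$. Each such node stores $\log|T|$ jump pointers, for total space $\bigO(|T|)$.
\item Answer a query for a node above a micro tree as follows: move to a designated macro descendant, take one jump pointer of height at least half the remaining distance, and finish with one ladder lookup. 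The ladder property guarantees the lookup succeeds.
\item Handle queries that stay inside a micro tree (size $<\tfrac14\log|T|$) with precomputed tables over all tree shapes of that size. There are $\bigO(\sqrt{|T|})$ shapes, each with polylogarithmically many queries, so the tables take $o(|T|)$ space and time to build.
\end{enumerate}

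The main obstacle is achieving truly linear preprocessing rather than $\bigO(|T|\log|T|)$. Jump pointers at every node would cost $\bigO(|T|\log|T|)$, so the macro/micro split with table lookup is essential. The RAM word-size assumption (positions fit in a register, so words have $\Omega(\log|T|)$ bits) lets us encode micro-tree shapes and indices in constantly many registers.

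An alternative I would consider is simpler for this application. Since $v$ and $v_2$ appear in a DFS order with $v_2$ inside $v$'s preorder interval, one could store each node's children sorted by preorder number. Then $v$ is the child of $v_1$ whose preorder interval contains $\mathrm{pre}(v_2)$. This needs predecessor search among the children, which is not constant time in general, so I would fall back on the level ancestor reduction above.
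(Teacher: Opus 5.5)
The paper gives no proof of this proposition; it only quotes it from \cite{GasieniecKPS05}. Your reduction is correct: the answer to a next link query $(v_1,v_2)$ is the ancestor of $v_2$ at depth $\depth(v_1)+1$, so a linear-preprocessing, constant-query level ancestor structure (Bender--Farach-Colton) suffices, and this is the standard justification of the cited fact. The one step your outline skips is a query from a micro node whose target lies above its micro tree; this is handled by first jumping to the parent of the micro-tree root and then running the macro procedure.
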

We use a compressed representation of paths in the  dag $\calD(\calG)$ that allows to contract consecutive left (resp., right) edges into triples from $\sfL$ (resp., $\sfR$).
Formally, we use sequences of triples
\begin{equation} \label{seq of triples}
\pi = (A_1,d_1, A_2) (A_2,d_2, A_3) \cdots (A_{n-1}, d_{n-1},A_n) (A_n,d_n,\alpha) \in (\sfL \cup \sfR)^+
\end{equation}
such that $n \geq 1$, $A_i \in V$ for $1 \le i \le n$ and $\alpha \in V \cup \Sigma$. 
We say that this $\pi$ starts in $A_1$ and ends in $\alpha$. 
The set of all triple sequences of the form \eqref{seq of triples}
 is denoted with $\Ssf(\calG) \subseteq (\sfL \cup \sfR)^+$.
A sequence $\pi \in \Ssf(\calG)$ represents a path in $\calD(\calG)$ that we denote by $\pat(\pi)$. It is obtained by replacing
every triple $(A_i, \ell, A_{i+1}) \in \sfL$ by a path of consecutive left edges from $A_i$ to $A_{i+1}$ and similarly
for triples from $\sfR$. By definition, every path in $\calD(\calG)$ is also a sequence in $\Ssf(\calG)$ consisting of triples from  
the edge set of $\calD(\calG)$. We will use the term 'path' only in this meaning. If $\pi \in\Ssf(\calG)$ is not a path then we will
always use the term 'sequence'.

If $\pi \in \Ssf(\calG)$ ends in a terminal symbol then $\pi$ is a \emph{terminal sequence}. The set of all terminal sequences
is denoted with $\St(\calG)$.
We say that $\pi \in \Ssf(\calG)$ is \emph{alternating} if for every factor $(A, d, B) (B, d', \alpha)$ in $\pi$ we have 
 $d = \ell$ if and only if $d' =r$. The set of all alternating sequences is denotes with $\Sa(\calG)$.
  An alternating sequence $\pi$ is the maximal compressed sequence of triples for $\pat(\pi)$.
 We write $\Sat(\calG)$ for $\Sa(\calG) \cap \St(\calG)$.
  Finally, for $A \in V$ we write $\Sat(\calG,A), \St(\calG,A), \Sa(\calG,A), \Ssf(\calG,A)$ 
 if we take only sequences starting with the variable $A$. Note that $\Path(\calG,A) \subseteq \St(\calG,A)$.
 For a sequence $\pi \in \St(\calG,A)$ we define the corresponding position in $[1,|A|_{\calG}]$
  by $\pos_A(\pi) = \pos_A(\pat(\pi))$. In the following, the underlying SLP $\calG$ will be clear from the context.
  We will therefore simply write $\Sat(A), \St(A), \Sa(A), \Ssf(A),\Sat, \St, \Sa, \Ssf$.
 
 \begin{example}
 For the SLP from Example~\ref{ex slp}, we have $\pi = (S,\ell,B)(B,r,b) \in \Sat(S)$
 and $\pat(\pi) = (S,\ell,A) (A,\ell,B) (B,r,C) (C,r,D) (D,r,b)$.
\end{example}
There is a simple process of making $\pi \in \Ssf$ alternating. Thereby we replace in $\pi$ factors of the form
$(A,d,B) (B,d,\alpha)$ with $d \in \{\ell,r\}$ by $(A,d,\alpha)$. This can be done in any order and the resulting 
alternating sequence is denoted with $\red(\pi)$. 

In \cite{LohreyMR18} an algorithm is presented that, after a preprocessing phase working in time $\bigO(|\calG|)$,
  takes  as input a sequence 
$\pi \in \Sat(A)$ and computes in constant time the unique $\pi' \in \Sat(A)$
with $\pos_A(\pi') = \pos_A(\pi) + 1$ in case $\pos_A(\pi) < |A|_{\calG}$
and otherwise returns $\bot$ (standing for undefined). A symmetric algorithm that returns the unique $\pi' \in \Sat(A)$
with $\pos_A(\pi') = \pos_A(\pi) - 1$ in case $\pos_A(\pi) > 1$
and otherwise returns $\bot$ is described as well.

Fix a variable $A \in V$ with $|A|_{\calG} \geq 2$ for the further consideration.       
We need a generalization of the algorithms from \cite{LohreyMR18}, where we store a
subset $\Pi \subseteq \Path(A)$ of paths with $|\Pi|$ a \emph{constant}.
It will be convenient to assume that the unique paths $\pi, \pi' \in \Path(A)$ with
$\pos_A(\pi)=1$ and $\pos_A(\pi') = |A|_{\calG}$ belong to $\Pi$; so in particular $|\Pi| \ge 2$.
We store $\Pi$ by a  rooted binary tree $\calB$ (every vertex is either a leaf or has a left and a right child)
with the following properties:
\begin{itemize}
\item $\calB$ has $|\Pi|$ leaves and every edge $e$ of $\calB$ is labelled by some $\pi_e \in \Sa$.
\item If $e \neq e'$ are the two outgoing edges of a node of $\calB$ then 
$\pi_e \in \sfL (\sfL \cup \sfR)^*$ if and only if  $\pi_{e'} \in \sfR (\sfL \cup \sfR)^*$.
\item For every $\pi \in \Pi$ there is a leaf $v$ such that the
following holds: If $e_1, e_2, \ldots, e_m$ are the edges along 
the path from the root of $\calB$ to the leaf $v$, then $\pi_{e_1} \pi_{e_2} \cdots \pi_{e_m} \in \St(A)$ and
$\pat(\pi_{e_1} \pi_{e_2} \cdots \pi_{e_m}) = \pi$.
\end{itemize}
In the situation of the third point, we will also say that the pair $(A,v)$ represents (in the tree $\calB$) the path $\pi$ 
and write $\pos_A(v)$ for $\pos_A(\pi)$. Since $|\Pi| \ge 2$, the tree $\calB$ has at least two leaves.
Since $|\Pi|$ is a constant, $\calB$ has a constant number of nodes. Only the edge labels $\pi_e$
occupy non-constant space.

Note that for given leaves $u,v$ of $\calB$ it is straightforward to decide in time $\mathcal{O}(1)$ which of the cases
$\pos_A(u) < \pos_A(v)$, $\pos_A(u) > \pos_A(v)$, or $\pos_A(u) = \pos_A(v)$ holds. 
The latter holds if and only if $u=v$. If $u \neq v$ then 
one has to follow
the paths from the root of $\calB$ to the leaves $u$ and $v$ up to the point where the two paths diverge to check whether
$\pos_A(u) < \pos_A(v)$ or $\pos_A(u) > \pos_A(v)$ holds.

\begin{figure}[t]
 \tikzset{alpha/.style={inner sep = 1pt, fill=white}}
 \tikzset{round/.style={inner sep = 1.5pt, circle, fill=black}}
\centering{
\begin{tikzpicture}

\draw (0,0) node[round] (r) {} 
            ++(6,3)   node[alpha]  (1) {$2$} ; 
\draw (r)  ++(2,-1)  node[round]  (0) {}
               ++(2,-1)  node[round]  (00) {}
               ++(2,-1)  node[alpha]  (000) {$1$} ;
\draw (00) ++(2,1)  node[alpha]  (001) {$3$} ;  
\draw (0) ++(4,2)  node[round]  (01) {}  ++(2.8,1.4)  node[alpha]  (011) {$4$} ;  
\draw (01)  ++(2.8,-1.4)  node[alpha]  (010) {$5$} ;

\draw (r) -- (1) node[midway,sloped,above] {$(A,r,b)$} ;
\draw (r) -- (0) node[midway,sloped,below] {$(A,\ell,B)$} -- (00) node[midway,sloped,below] {$(B,\ell,C)$} -- (000) node[midway,sloped,below] {$(C,\ell,a)$};
\draw (00) -- (001) node[midway,sloped,above] {$(C,r,c)$};
\draw (0) -- (01) node[midway,sloped,above] {$(B,r,D) (D,\ell,E) (E,r,C)$};

\draw (01) -- (010) node[midway,sloped,below] {$(C,\ell,G) (G,r,b)$};
\draw (01) -- (011) node[midway,sloped,above] {$(C,r,F) (F,\ell,a)$};

\end{tikzpicture}}
\caption{An example for a tree $\calB$.}  \label{fig (T,f)}
\end{figure}

Figure~\ref{fig (T,f)} shows an example for a tree
$\calB$ as above. It is not derived from the SLP from Example~\ref{ex slp},
for which the number of left-right and right-left turns in the dag $\calD(\calG)$ is too small to get an interesting $\calB$.
The leaves are called $1,2,3,4,5$ in Figure~\ref{fig (T,f)}.
The reader can easily come up with an SLP that could realize the tree $\calB$ from Figure~\ref{fig (T,f)}.

Consider a tree $\calB$ that stores the set of paths $\Pi \subseteq \Path(A)$.
In the following, $\calB$ is considered as a global
data structure that is modified as a side effect by the procedures that we outline.
The procedure \textsf{right} from Algorithm~\ref{right} in the appendix takes a leaf $u$ of $\calB$ as input. 
If $\pos_A(u) = |A|_{\calG}$ then the algorithm returns $\bot$ and the tree $\calB$ is not modified. 
If  $\pos_A(u) < |A|_{\calG}$ 
and there is a leaf $u'$ in $\calB$ with $\pos_A(u') = \pos_A(u)+1$ then this leaf $u'$ is returned and $\calB$ is not modified.
Otherwise a new leaf $u'$ is added to $\calB$ such that for the new tree we have 
$\pos_A(u') = \pos_A(u)+1$, whereas for all leaves $v$ from the original tree $\calB$, $\pos_A(v)$ does not change.
Moreover, $u'$ is returned.
The procedure \textsf{right} uses a procedure \textsf{split} that can be also found in the appendix.

There is a symmetric procedure \textsf{left} that has the same specification as \textsf{right} except that 
the node $u'$ satisfies $\pos_A(u') = \pos_A(u)-1$ if $\pos_A(u) >  1$; otherwise $\bot$ is returned.
 It can be obtained analogously to the procedure  \textsf{right}.

The idea for the procedure \textsf{right} is derived from  \cite{LohreyMR18}.
 Figure~\ref{fig after update} shows a tree that could arise from the call \textsf{right}$(3)$ for the tree from 
 Figure~\ref{fig (T,f)}. For this, $\lreduce(B,\ell,C) \neq \varepsilon$ (or equivalently, $C$ is not the left symbol in $\rho(A)$) must hold. 
 This causes the $(B,\ell,C)$-labelled edge from Figure~\ref{fig (T,f)} to be split into two edges with labels
 $(B,\ell,B')$ and $(B',\ell,C)$, where $(B,\ell,B') = \lreduce(B,\ell,C)$ and $\rho(B') = CF$. 
 The new node that arises from this edge splitting gets a right child (the new leaf 6 in  Figure~\ref{fig after update}) with label $(B',r,F)(F,\ell,a)$. Here we have
 $a = \omega_L(F)$. 
Note that if  $\lreduce(B,\ell,C) = \varepsilon$ (meaning that the triple $(B, \ell, C)$ represents a single edge in $\calD(\calG)$), then
the call $\textsf{right}(3)$ would split the edge labelled with $(B,r,D) (D,\ell,E) (E,r,C)$ in Figure~\ref{fig (T,f)}.
 Dealing with all the possible cases in the procedure \textsf{right} is not difficult but a bit tedious. Each case needs a constant number of modifications
 in the tree $\calB$, which has constant size by our assumption. In addition a constant number of edge labels (i.e., sequences from $\Sa$) are modified, where every modification
 replaces a constant number of triples at the beginning or the end of the sequence; see also Figures~\ref{fig right 1}--\ref{fig right 5} in the appendix.

\begin{figure}[t]
 \tikzset{alpha/.style={inner sep = 1pt, fill=white}}
 \tikzset{round/.style={inner sep = 1.5pt, circle, fill=black}}
\centering{
\begin{tikzpicture}

\draw (0,0) node[round] (r) {} 
            ++(6,3)   node[alpha]  (1) {$2$} ; 
\draw (r)  ++(2,-1)  node[round]  (0) {}
               ++(2,-1)  node[round]  (00') {}
               ++(2,-1)  node[round]  (00) {}
               ++(2,-1)  node[alpha]  (000) {$1$} ;
\draw (00) ++(2,1)  node[alpha]  (001) {$3$} ;  
\draw (0) ++(4,2)  node[round]  (01) {}  ++(2.8,1.4)  node[alpha]  (011) {$4$} ;  
\draw (01)  ++(2.8,-1.4)  node[alpha]  (010) {$5$} ;

\draw (00')  ++(2.8,1.4)  node[alpha]  (00'1) {$6$} ;

\draw (r) -- (1) node[midway,sloped,above] {$(A,r,b)$} ;
\draw (r) -- (0) node[midway,sloped,below] {$(A,\ell,B)$} -- (00') node[midway,sloped,below] {$(B,\ell,B')$} 
       -- (00) node[midway,sloped,below] {$(B',\ell,C)$} -- (000) node[midway,sloped,below] {$(C,\ell,a)$};
\draw (00) -- (001) node[midway,sloped,above] {$(C,r,c)$};
\draw (0) -- (01) node[midway,sloped,above] {$(B,r,D) (D,\ell,E) (E,r,C)$};

\draw (01) -- (010) node[midway,sloped,below] {$(C,\ell,G) (G,r,b)$};
\draw (01) -- (011) node[midway,sloped,above] {$(C,r,F) (F,\ell,a)$};

\draw (00') -- (00'1) node[midway,sloped,above] {$(B',r,F) (F,\ell,a)$};

\end{tikzpicture}}
\caption{The tree $\calB$ from Figure~\ref{fig (T,f)} after the update $\textsf{right}(3)$.
We have $\rho(B') = CF$.}  \label{fig after update}
\end{figure}

We will also need a procedure for removing a leaf $v$ from the tree $\calB$ (where $1 < \pos_A(v)< |A|_{\calG}$) in constant time.
This is straightforward.
We first remove the leaf $v$ together with its incoming edge. This leaves two edges $e = (x,y), e' = (y,z)$ where $(y,z)$ is the unique
outgoing edge of $y$. We merge these edges into a single edge with label $\red(\pi_e \pi_{e'}) \in \Sa$. If for instance 
$\pi_e$ ends with a triple $(U,\ell,V)$ and $\pi_{e'}$ starts with a triple $(V,\ell,W)$ then these triples are combined
into the triple $(U,\ell,W)$. 
Figure~\ref{fig remove leaf} shows the tree after removing the leaf $5$ from the tree in Figure~\ref{fig (T,f)}.

\begin{figure}[t]
 \tikzset{alpha/.style={inner sep = 1pt, fill=white}}
 \tikzset{round/.style={inner sep = 1.5pt, circle, fill=black}}
\centering{
\begin{tikzpicture}

\draw (0,0) node[round] (r) {} 
            ++(6,3)   node[alpha]  (1) {$2$} ; 
\draw (r)  ++(2,-1)  node[round]  (0) {}
               ++(2,-1)  node[round]  (00) {}
               ++(2,-1)  node[alpha]  (000) {$1$} ;
\draw (00) ++(2,1)  node[alpha]  (001) {$3$} ;  
\draw (0) ++(5.4,2.7)  node[alpha]  (011) {$4$} ;  

\draw (r) -- (1) node[midway,sloped,above] {$(A,r,b)$} ;
\draw (r) -- (0) node[midway,sloped,below] {$(A,\ell,B)$} -- (00) node[midway,sloped,below] {$(B,\ell,C)$} -- (000) node[midway,sloped,below] {$(C,\ell,a)$};
\draw (00) -- (001) node[midway,sloped,above] {$(C,r,c)$};
\draw (0) -- (011) node[midway,sloped,above] {$(B,r,D) (D,\ell,E) (E,r,F) (F,\ell,a)$};

\end{tikzpicture}}
\caption{The tree $\calB$ from Figure~\ref{fig (T,f)} after removing the leaf $5$.}  \label{fig remove leaf}
\end{figure}          

 Note that the above tree $\calB$ only stores paths from $\Path(A)$. For our application in Section~\ref{sec-main} we need trees
 $\calB_A$ for several variables $A \in V$. We then write $\textsf{right}(A,u)$ and $\textsf{left}(A,u)$ to indicate the tree $\calB_A$
 on which these operations are executed.

\section{Ranked MSO-enumeration} \label{sec-main}

In this section we prove the main technical result, that covers ranked MSO-enumeration (as explained in the introduction) as well
as the enumeration of the image $f(w)$ for a polyregular function, both for compressed input words.

Fix two finite alphabets $\Sigma$ and $\Gamma$ and an MSO-interpretation 
 $f : \Sigma^+ \to \Gamma^+$ as defined in Section~\ref{sec-polyreg}. 
 Let $\Phi(x_1,\ldots,x_k)$, $\Phi_<(x_1,\ldots,x_k, y_1,\ldots,y_k)$, and $\Phi_b(x_1,\ldots,x_k)$ for all $b \in \Gamma$
 be the corresponding MSO-formulas over words from $\Sigma^+$ defining $f$. We fix these formulas for this section and
define the following additional MSO-formulas (we write $\bar{x}$ for $(x_1, \ldots,x_k)$ and similarly for $\bar{y}$ and $\bar{z}$):
 \begin{eqnarray*}
 \Phi_{\textsf{min}}(\bar{x}) & = & \Phi(\bar{x}) \wedge \neg \exists \bar{z}: \Phi_<(\bar z, \bar{x}), \\
 \Phi_{\textsf{max}}(\bar{x}) & = & \Phi(\bar{x}) \wedge \neg \exists \bar{z}: \Phi_<(\bar{x},\bar z), \\
 \Phi_{\succ}(\bar{x},\bar{y}) & = & \Phi_<(\bar{x},\bar{y}) \wedge \neg \exists \bar{z}: \Phi_<(\bar{x},\bar{z}) \wedge \Phi_<(\bar{z},\bar{y}), \\
\Phi_{\succ,i}(\bar{x},y_i) & = & \exists y_1,\ldots, y_{i-1}, y_{i+1}, \ldots, y_k : \Phi_{\succ}(\bar{x},\bar{y}).
 \end{eqnarray*}
So, for every word $w \in \Sigma^+$ of length $n$, there are unique tuples $\min_w, \max_w \in [1,n]^k$ such that 
$w \models \Phi_{\min}(\min_w)$ and $w \models \Phi_{\max}(\max_w)$ hold.
Clearly, $\min_w$ (resp., $\max_w$) is the first (resp., last) position in
the word structure for $f(w)$. Moreover,  $\Phi_{\succ}(\bar{x},\bar{y})$ defines the successor relation on positions of $f(w)$. 
Hence, the binary relation $\valA{\Phi_{\succ}}_w$ defines a function $\succ_w : \valA{\Phi}_w \setminus \{ \textsf{max}_w \} \to  \valA{\Phi}_w \setminus \{ \textsf{min}_w \}$. 
Similarly, $\Phi_{\succ,i}(\bar{x},y_i)$ defines for every word $w$ of length $n$ a function $\succ_{w,i} : \valA{\Phi}_w \setminus \{ \textsf{max}_w \} \to [1,n]$
that maps a tuple $\bar{p}$ from its domain to the $i$-th component of $\succ_w(\bar{p})$.
Finally, the formulas $\Phi_b$ ($b \in \Gamma$) define a function $\gamma : \valA{\Phi}_w \to \Gamma$, where $\gamma(\bar{p})$ is the unique 
symbol $b \in \Gamma$ such that $w \models \Phi_b(\bar{p})$ holds for $\bar{p} \in \valA{\Phi}_w$.

We then define a function $F$ that maps an input word $w \in \Sigma^+$
to the sequence $(b_1, \bar{p}_1) (b_2, \bar{p}_2) \cdots (b_m, \bar{p}_m)$ with 
$b_i \in \Gamma$ and $\bar{p}_i \in [1,n]^k$ that is defined as follows:
\begin{itemize}
\item
$\bar{p}_1 = \min_w$, $\bar{p}_m = \max_w$,
\item for every $i \in [2,m]$, $\bar{p}_i = \succ_w(\bar{p}_{i-1})$,
\item for every $i \in [1,m]$, $b_i = \gamma(\bar{p}_i)$.
\end{itemize}
So $F$ is the same as the MSO-interpretation $f$, where in addition we add the domain tuples from $f(w)$ to the output.
The main result of this section is the following:

\begin{theorem} \label{thm main}
The enumeration problem that maps an rSLP $\calG$ to the sequence $F(\valA{\calG})$ can be solved 
after linear preprocessing in
constant delay.
\end{theorem}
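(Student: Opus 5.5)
We follow Bojańczyk's strategy for the uncompressed case \cite{Boj18}, made to work on an SSLP. The key point is that once the current tuple $\bar p$ is known, its successor $\succ_w(\bar p)$ is determined componentwise by the formulas $\Phi_{\succ,i}(\bar x,y_i)$. Each of these has $k$ free variables that are already fixed plus one free variable to be found. So we need a data structure that, for a constant number of marked positions, can locate a position satisfying a fixed MSO property relative to the marks, in constant time. Each such marked position is stored as a path in a tree $\calB_A$ from Section~\ref{sec-traversal}.

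\textbf{Preprocessing.} By the B\"uchi-type translation, each formula $\Phi_{\succ,i}$, $\Phi_{\min}$ and $\Phi_b$ becomes a finite automaton over the extended alphabet $\Sigma\times\{0,1\}^{k+1}$. Let $h:\Sigma^*\to M$ map a word to the product of the $\Bool(\calA,\cdot)$-monoids of all these automata, restricted to letters with all marker bits $0$, and extended to single marked letters. Apply Theorem~\ref{thm-Simon-SLP} to obtain an rSSLP $\calG'$ of size $\bigO(|\calG|)$ for this $h$. Then compute bottom-up, in linear time, $h(A)$ for every variable $A$, together with the trie and next-link structures of Proposition~\ref{prop:gas}.

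For each variable $A$ and each type of query we also precompute constant-size \emph{witness} information. In an idempotent node $e$ whose child words all map to $e$, a marked position can only be found in a child containing a mark, in a neighbouring child, or in the first or last child. This is the usual Simon argument: between marked children all unmarked children evaluate to $e$, so the only positions that matter are the leftmost and rightmost candidates. For every variable of an $e$-part we therefore store, for each monoid pair $(m_1,m_2)$, the leftmost and rightmost terminal path, in alternating form, leading to a position whose local context realises $(m_1,m_2)$. These are computed bottom-up, taking a constant number of choices per variable.

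\textbf{Enumeration.} We maintain the tuple $\bar p$ as $k$ leaves of a tree $\calB_S$ for the root $S$. To compute $\succ_{w,i}(\bar p)$, we descend the SSLP guided by the monoid values. Along the paths to the marked leaves, the left and right contexts are products of a constant number of monoid values. This holds because the idempotent-contracted depth is at most $3|M|$: every contracted idempotent stretch contributes just $e$ on each side, since the in-between children all map to $e$. The monoid values are read off the edge labels of $\calB_S$. At each branching we decide which direction contains the target position that satisfies the accepting condition. Inside an idempotent stretch we jump, using the precomputed leftmost/rightmost witnesses or the \textsf{left}/\textsf{right} operations of Section~\ref{sec-traversal}, to an adjacent child. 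This costs a constant number of operations, each taking constant time. The resulting path is inserted into $\calB_S$ as a new leaf, and leaves no longer needed are removed, which keeps $|\Pi|$ constant. Finally, $\gamma(\bar p)$ is obtained by evaluating $h$ on the constant-size marked context.

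\textbf{Main obstacle.} The hard step is making the descent truly constant time. Each triple in $\sfL\cup\sfR$ compresses an unbounded chain of left or right edges, and the monoid values hanging off such a chain must be aggregated. We would precompute along each trie $T_L(a)$ and $T_R(a)$ the prefix products of the off-path siblings. However, products are not invertible, so a product over the segment between $\alpha$ and $A$ cannot be recovered from prefix products. Our first attempt is to exploit the SSLP structure instead. Along any chain, all but $\bigO(|M|)$ steps lie inside idempotent parts, where the product is simply $e$. So the chain decomposes into a constant number of idempotent blocks, and we store with each trie node a pointer to the start of its current block, its block product, and its non-idempotent ancestors.

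Locating the target inside an idempotent block is handled by the witness tables. We expect correctness to rest on the idempotent absorption law $e\,m\,e$-context. That is, the exact index of a child in an idempotent stretch never matters, only whether it is first, last, or adjacent to a mark.
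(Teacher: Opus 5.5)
The gap is the step you flag yourself as the main obstacle, and your sketched fix does not close it. You store all marked positions as leaves of a single tree $\calB_S$ over the whole SLP, and this breaks two things.

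First, \textsf{left}/\textsf{right} on $\calB_S$ move to the neighbouring \emph{position} of $w$. Inside an idempotent stretch you need the neighbouring \emph{child} of the idempotent node instead, i.e.\ the lexicographic neighbour among paths that stop at the first symbol of $\Sigma_e$. These can lie arbitrarily far apart in $w$.

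Second, an edge label of $\calB_S$ is an alternating sequence with an unbounded number of triples, since a path can zigzag arbitrarily often inside one $e$-part. So the left/right contexts cannot be ``read off'' in constant time. Your per-trie-node block annotations only aggregate along a single $\sfL$- or $\sfR$-triple. You would still have to locate the $\bigO(|M|)$ block boundaries inside a label of unbounded length.

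The missing idea is the paper's \emph{contracted representation}. Cut each $\pi_i$ into at most $3|M|$ pieces: explicit edges of $\calD(\calG)$, and maximal subpaths lying in some $\calD(\calG_e)$. Store each maximal subpath starting in $A\in V_e$ as a leaf of a separate tree $\calB_A$ built for the $e$-part $\calG_e$, with terminal alphabet $\Sigma_e$. Then every context is a product of constantly many precomputed values: the siblings of the explicit edges, and $e$ or $1$ per stretch, because all siblings inside $\calG_e$ map to $e$. No aggregation along chains is ever needed. Branching off to the previous or next child of an idempotent node is exactly \textsf{left}$(A,v)$ or \textsf{right}$(A,v)$ in $\calB_A$. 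First and last children are the leftmost and rightmost paths, which $\calB_A$ always contains.

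There are two further problems. Your witness tables, as described, are not linear. A terminal path in alternating form can have length proportional to the depth of $\calG$, so storing such paths for every variable and every monoid pair exceeds $\bigO(|\calG|)$. Storing only one step per variable instead makes following a witness cost time proportional to the depth. The paper needs no such tables: after branching off, the descent just continues along the contracted representation, which is at most $3|M|$ constant-time choices.

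Also, the ``only first/last child'' argument needs a morphism that records, for every factor, between which states a run can place the sought variable inside that factor. The paper gets this by taking $M=\Bool(\calA',\Sigma\times\{\emptyset\})$, where $\calA'$ is the automaton with $y_1$ projected away, together with the split $Q=Q_0\uplus Q_1$. A morphism built only from the unmarked letters of $\calA$ lacks this property. Then children with the same idempotent value need not be interchangeable, and the unique target can sit in a middle child.
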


\begin{proof}
Let $w = \valA{\calG}$.
In order to prove the theorem, we will compute in linear time from the rSLP $\calG$ a data structure that allows to solve in constant time the following tasks:
\begin{enumerate}[(i)]
\item compute the tuple  $\min_w$,
\item compute from a given tuple $\bar{p} \in \valA{\Phi}_w$ the symbol $\gamma(\bar p)$,
\item compute from a given tuple $\bar{p} \in \valA{\Phi}_w \setminus \{ \textsf{max}_w \}$ and $i \in [1,k]$ the position $\succ_{w,i}(\bar p)$, 
\item check whether a given tuple $\bar{p} \in \valA{\Phi}_w$ is $\max_w$.
\end{enumerate}
Point (iii) then allows to compute in constant  time also the tuple $\succ_{w}(\bar p)$ (since $k$ is a constant in our setting).
To compute $F(w)$ one then iterates the computation of $\succ_{w}$  
starting with the $k$-tuple $\min_w$. Moreover, it suffices to show (iii) for $i=1$ (the same method works for all $i \in [1,k]$).

In the following, we focus on step (iii) (the most difficult one) and thereby we will also address step (iv).
Later we explain how steps (i) and (ii) can be solved using similar techniques.
We  will use a factorization tree $T$ for $w$ with respect to a morphism $h : \Sigma^* \to M$, where $h$ and $M$ only
depend on the fixed MSO-interpretation. We start with the definition of $M$ and $h$.

Let $\mathcal{Z} = \{ x_1, \ldots, x_k, y_1\}$ be the set of free variables of the formulas $\Phi_{\succ,1}$.
We transform $\Phi_{\succ,1}$ into a  trim NFA $\calA$ using the standard procedure for transforming MSO-formulas over words into NFAs.
The alphabet of $\calA$ is the  set $\Sigma \times 2^\mathcal{Z}$ and $\calA$ has the following properties:
\begin{itemize}
\item For every word $(a_1,Z_1) (a_2, Z_2) \cdots (a_n,Z_n) \in L(\calA)$ (with $a_i \in\Sigma$ and $Z_i \subseteq \mathcal{Z}$)
 and every variable $z \in \mathcal{Z}$ there is exactly one position $i \in [1,|w|]$ with $z \in Z_i$.
\item  $(a_1,Z_1) (a_2, Z_2) \cdots (a_n,Z_n) \in L(\calA)$ if and only if 
$\succ_{u,1}(p_1,\ldots,p_k) = p'$,
where $u = a_1 a_2 \cdots a_n$, $p_i$ is the unique position in $[1,n]$ with $x_i \in Z_{p_i}$ and $p'$ is the unique position in $[1,n]$ with $y_1 \in Z_{p'}$.
\end{itemize}
This, together with the fact that $\calA$ is trim, ensures that we can partition the state set $Q$ of $\calA$ into two sets $Q_{0}$ and $Q_{1}$ such that the following holds,
where a path in $\calA$ is called $y_1$-free if it  does not use a transition with a label $(a,Z)$ 
such that $y_1 \in Z$:
\begin{itemize}
\item states in $Q_{0}$ can be reached from the initial state $q_{0}$ of $\calA$ along a $y_1$-free path, and
\item from states in $Q_{1}$ one can reach a final state along a $y_1$-free path.
\end{itemize}
Note that this means that on every path from $q_{0}$ to a state in $Q_{1}$ the variable $y_1$ must be seen exactly once
and, similarly, on every path from $Q_{0}$ to a final state the variable $y_1$ must be seen exactly once.
For a transition $(q, (a,Z), q')$ in $\calA$ with $y_1 \in Z$ we must have $q \in Q_0$ and $q' \in Q_1$.

From $\calA$ we then derive the NFA $\calA'$ by replacing every transition label $(a,Z)$ 
with $y_1 \in Z$  by $(a, Z \setminus \{y_1\})$.
It is an NFA for the MSO-formula $\exists y_1 : \Phi_{\succ,1} \ = \ \exists \bar y : \Phi_{\succ}(\bar x, \bar y)$. 
Its alphabet is $\Sigma \times 2^{\{x_1,\ldots,x_k\}}$.
Note that $\valA{\exists \bar y : \Phi_{\succ}(\bar x, \bar y)}_w = \valA{\Phi}_w \setminus \{ \max_w \}$.
We define the monoid $M = \Bool(\calA', \Sigma \times \{\emptyset\})$. Let
$h : \Sigma^* \to M$ be the homomorphism with $h(a) = B_{a}$, where $B_a$ is the boolean matrix corresponding
to the state transformation induced by the letter $(a,\emptyset)$ in the NFA $\calA'$. Note that transitions in $\calA'$ having a label
$(a, Z)$ with $Z \neq\emptyset$ are irrelevant for the monoid $M$ (one could remove all these transitions and
would obtain the same monoid).

\subparagraph{The uncompressed setting.}
Before we proceed with the SLP-compressed setting, let us first consider the case where $w$ is an uncompressed string of length $n$.
This case is sketched in \cite{Boj18}. We follow this approach with some adaptions and later show how to extend it to the compressed setting
using the machinery from Sections~\ref{sec SSLP} and \ref{sec-traversal}.

Let  $w = a_1 a_2 \cdots a_n$ with $a_i \in \Sigma$. In the precomputing phase 
we compute in time $\bigO(n)$ a factorization tree $T$ of height at most $3|M|$ for the word $w$ with respect to the above homomorphism $h : \Sigma^* \to M$.
We assume that $T$ is stored by having pointers from every node to its parent node, it left sibling (if it exists) and its right sibling (if it exists).
Moreover, there are pointers from every non-leaf node to its left-most child and to its right-most child.
Let $v_0$ be the root of $T$. For every node $v$ of $T$ let $|v|_T$ be the number of leaves of $T$ below $v$
(the so-called \emph{leaf size} of $v$). We precompute in time $\bigO(|w|)$ all numbers $|v|_T$ and all monoid elements $h(v)$
for $v$ a node of $T$. For a subset $U$ of nodes of $T$ we write $h(U)$ for the product $\prod_{u \in U} h(u)$, where
we run over the elements from $U$ in the order $<_T$.

Let $\bar{p} = (p_1, p_2, \ldots, p_k) \in [1,n]^k$ be the argument for $\succ_{w,1}$ and assume that $p_1 < p_2 < \cdots < p_k$. The case
where the order between the $p_i$ is different or some of the $p_i$ are equal does not lead to additional complications. 
In addition, it will be convenient to assume that $p_1 = 1$ and $p_k = n$. This can be assumed by adding dummy variables in the MSO-formulas
that are fixed to the first and last position of the word $w$.
We assume that $\bar{p} \in \valA{\Phi}_w$; this property will be always preserved during the computation of $F(w)$.
We denote with $w \otimes \bar{p}$ the word $(a_1, Z_1) (a_2, Z_2) \cdots (a_n,Z_n)$, where $Z_i = \{ x_j : j \in [1,k], p_j = i \}$. 
Our goal is to check whether $\bar{p} \neq \max_w$ (see step (iv)) and, if the latter holds,
compute $p' = \succ_{w,1}(\bar{p}) \in [1,n]$. 

Let $v_i$ be the $p_i$-th leaf of $T$ with respect to the order $<_T$. 
We make the additional assumption that for every $i \in [1,k]$ we have computed the path $\pi_i = [v_0,v_i]$ from the root of $T$ to the leaf $v_i$;
this property will be also preserved during the computation of $F(w)$.
Moreover, for every node $v$ on one of these paths $\pi_i$ we assume that we have computed the 
number
\begin{equation} \label{v_T<}
|v|_T^< := | \{ u : u \text{ is a leaf of $T$ and } u <_T v \}|.
\end{equation}
For the root node $v_0$ we have $|v_0|_T^< = 0$.
We will compute in addition to $p' = \succ_{w,1}(\bar{p})$  (if it exists) also the path $\pi' = [v_0, v']$, where $v'$ is the $p'$-th leaf of $T$ with respect to the order $<_T$. The computation starts in $v_0$
and ends in the leaf $v'$.
Moreover, for every node $v$ along this path $\pi'$ we compute the number $|v|_T^<$. When this process finally arrives at the leaf $v'$, we can
determine the position $p'$ by $p' = |v'|_T^<+1$.

The constant height bound $3 |M|$ of the tree $T$ ensures that one can compute all monoid elements $h(w[p_{i-1}+1, p_i-1]) \in M$ ($2 \le i \leq k$) in constant time. To see this,
let $v_{i-1,i}$ be the lowest common ancestor of the nodes $v_{i-1}$ and $v_i$ and define the paths $\tau_{i-1} = [v_{i-1,i},v_{i-1}]$ and $\tau_i = [v_{i-1,i}, v_i]$; see also
 Figure~\ref{fig part fact tree}. 
We define the following set of nodes of $T$:
 \begin{eqnarray*}
  \encl_{i-1,i} &=& \{ v : v_{i-1} <_T v <_T v_i \} \setminus \tau_i
\end{eqnarray*}
($\encl$ stands for `enclosed').
In Figure~\ref{fig part fact tree}, $\encl_{i-1,i}$ is the gray shaded part without
the red paths $\tau_{i-1}$ and $\tau_i$.
 For every node $u$ in $T$ we define $\ch_{i-1,i}(u)$ as the set of those children of $u$ that belong to $\encl_{i-1,i}$.
 We have $\ch_{i-1,i}(u) = \emptyset$ if $u$ does not belong to $\tau_{i-1}  \cup \encl_{i-1,i} 
 \cup \tau_i$. Moreover, define 
 \[ \ch_{i-1,i} = \bigcup_{u \in  \tau_{i-1} \cup \tau_i} \ch_{i-1,i}(u) .
 \]
 In Figure~\ref{fig part fact tree}, $\ch_{i-1,i}$ is the set of (dark and light) blue nodes in the gray region.
  Then we can compute 
 $h(w[p_{i-1}+1, p_i-1])$ as 
 \begin{equation} \label{M-product over branching-off nodes}
 h(w[p_{i-1}+1, p_i-1]) \ = h(\ch_{i-1,i}).
 \end{equation}
  This product can be computed in constant time using the 
 already computed paths $\pi_{i-1}$ and $\pi_i$, the pointers in $T$, and the monoid elements $h(v)$. 
 Powers of the same idempotent element $e$ in $h(\ch_{i-1,i})$ can be reduced to a single $e$. Then the 
 resulting product has at most one monoid element for each node in $\tau_{i-1} \cup \tau_i$, which yields
 a product of length at most $6 |M|$.
 For the example, in Figure~\ref{fig part fact tree} we obtain 
 $h(\ch_{i-1,i}) = m\,n\,e\,o\,m\,\epsilon \, m\,\epsilon\,o\,n\,e\,o$.

\begin{figure}[t]
 \tikzset{alpha/.style={inner sep = 1pt, fill=white}}
 \tikzset{round/.style={inner sep = 1.5pt, circle, fill=black}}
\centering{
\begin{tikzpicture}

\draw[white, fill=gray!15] (0,0) -- ++(1,1)  -- ++(-1,1) -- ++(-1,1)  -- ++(1,1)  -- ++(1,1) -- ++(1,1) -- ++(-1,1) -- ++(-1,1) -- ++(1,1)  -- ++(1,1) 
-- ++(1,-1) -- ++(1,-1) -- ++(1,-1) -- ++(-1,-1) -- ++(1,-1) -- ++(1,-1) -- ++(-1,-1) -- ++(-1,-1) -- ++(1,-1) -- ++(1,-1) -- cycle;

\draw[red,thick] (0,0) node[round, color=red, label = below:\textcolor{darkgreen}{$v_{i-1}$}] (v1) {} 
           -- ++(1,1)  node[round,red]  (x1) {}
           -- ++(-1,1)  node[round,red]  (x2) {} 
           -- ++(-1,1)  node[round,red]  (x3) {} 
           -- ++(1,1)  node[round,red, label = right:\textcolor{red}{$e$}, label= {[label distance=-1.4mm]left:$\textcolor{darkgreen}{u'}$}]  (x4) {} 
           -- ++(1,1)  node[round,red, label = right:\textcolor{red}{$e$}, label = {[label distance=-1.1mm]above:$\textcolor{darkgreen}{u}$}]  (x5) {} 
           -- ++(1,1)  node[round,red]  (x6) {} 
           -- ++(-1,1)  node[round,red]  (x7) {} 
           -- ++(-1,1)  node[round,red]  (x8) {} 
           -- ++(1,1)  node[round,red, label = below:\textcolor{red}{$\epsilon$}]  (x9) {} 
           -- ++(1,1)  node[round, color=red, label = right:\textcolor{red}{$v_{i-1,i}$},label = left:\textcolor{red}{$\epsilon$}]  (v) {} 
           -- ++(1,-1)  node[round,red, label = below:\textcolor{red}{$\epsilon$}]  (z1) {} 
           -- ++(1,-1)  node[round,red, label = left:\textcolor{red}{$\epsilon$}]  (z2) {} 
            -- ++(1,-1)  node[round,red, label = left:\textcolor{red}{$\epsilon$}]  (z3) {}
             -- ++(-1,-1)  node[round,red]  (z4) {} 
             -- ++(1,-1)  node[round,red]  (z5) {} 
             -- ++(1,-1)  node[round,red]  (z6) {} 
             -- ++(-1,-1)  node[round,red]  (z7) {}
             -- ++(-1,-1)  node[round,red, label = left:\textcolor{red}{$e$}]  (z8) {}  
             -- ++(1,-1)  node[round,red, label = left:\textcolor{red}{$e$}]  (z9) {} 
             -- ++(1,-1)  node[round, color=red, label = below:\textcolor{darkgreen}{$v_i$}]  (v2) {}    ;

\draw[decorate,decoration=snake]  (v) -- ++(0,2)  node[round, label = above:$\textcolor{darkgreen}{v_0}$] {}; 
             
\draw[blue,thick] (x1) -- ++(.7,-.7)  node[round, color=blue, label = below:\textcolor{blue}{$m$}]  {} ;
\draw[blue,thick] (x4) -- ++(.7,-.7)  node[round, color=blue, label = below:\textcolor{blue}{$n$}]  {} ;
\draw (x5) -- ++(-1.77,-.7)  node[alpha]  {$e$} ;
\draw (x5) -- ++(-1.35,-.7)  node[alpha]  {$e$} ;
\draw (x5) -- ++(-1,-.7)  node[alpha]  {$e$} ;
\draw[blue,thick] (x5) -- ++(-.4,-.7)  node[round, color=blue, label = below:\textcolor{blue}{$e$}]  {} ;
\draw[lightblue] (x5) -- ++(-.1,-.7)  node[round, color=lightblue, label = below:\textcolor{lightblue}{$e$}]  {} ;
\draw[lightblue] (x5) -- ++(.2,-.7)  node[round, color=lightblue, label = below:\textcolor{lightblue}{$e$}]   {} ;
\draw[lightblue] (x5) -- ++(.5,-.7)  node[round, color=lightblue, label = below:\textcolor{lightblue}{$e$}]   {} ;
\draw[lightblue] (x5) -- ++(.8,-.7)  node[round, color=lightblue, label = below:\textcolor{lightblue}{$e$}]   {} ;
\draw[blue,thick] (x5) -- ++(1.1,-.7)  node[round, color=blue, label = below:\textcolor{blue}{$e$}]  {} ;
\draw[blue,thick] (x6) -- ++(.7,-.7)  node[round, color=blue, label = below:\textcolor{blue}{$o$}] {} ;
\draw[blue,thick] (x9) -- ++(.7,-.7)  node[round, color=blue, label = below:\textcolor{blue}{$m$}]   {} ;

\draw (x2) -- ++(-.7,-.7)  node[alpha]   {} ;
\draw (x3) -- ++(-.7,-.7)  node[alpha]   {} ;
\draw (x7) -- ++(-.7,-.7)  node[alpha]   {} ;
\draw (x8) -- ++(-.7,-.7)  node[alpha]   {} ;

\draw(v) -- ++(-1.2,-.8)  node[alpha]  {$\epsilon$} ;
\draw[blue,thick] (v) -- ++(-.6,-.8)  node[round, color=blue, label = below:\textcolor{blue}{$\epsilon$}]  {} ;
\draw[lightblue] (v) -- ++(-.3,-.8)  node[round, color=lightblue, label = below:\textcolor{lightblue}{$\epsilon$}]  {} ;
\draw[lightblue] (v) -- ++(0,-.8)  node[round, color=lightblue, label = below:\textcolor{lightblue}{$\epsilon$}]  {} ;
\draw[lightblue] (v) -- ++(.3,-.8)  node[round, color=lightblue, label = below:\textcolor{lightblue}{$\epsilon$}] {} ;
\draw[blue,thick] (v) -- ++(.6,-.8)  node[round, color=blue, label = below:\textcolor{blue}{$\epsilon$}] {} ;
\draw(v) -- ++(1.2,-.8)  node[alpha]  {$\epsilon$} ;
\draw(v) -- ++(1.5,-.8)  node[alpha]  {$\epsilon$} ;

\draw[blue,thick] (z1) -- ++(-.7,-.7)  node[round, color=blue, label = below:\textcolor{blue}{$m$}] {} ;

\draw[blue,thick] (z2) -- ++(-1.1,-.7)  node[round, color=blue, label = below:\textcolor{blue}{$\epsilon$}] {} ;
\draw[lightblue] (z2) -- ++(-.8,-.7)  node[round, color=lightblue, label = below:\textcolor{lightblue}{$\epsilon$}]  {} ;
\draw[lightblue] (z2) -- ++(-.5,-.7)  node[round, color=lightblue, label = below:\textcolor{lightblue}{$\epsilon$}]  {} ;
\draw[lightblue] (z2) -- ++(-.2,-.7)  node[round, color=lightblue, label = below:\textcolor{lightblue}{$\epsilon$}] {} ;
\draw[lightblue] (z2) -- ++(.1,-.7)  node[round, color=lightblue, label = below:\textcolor{lightblue}{$\epsilon$}]  {} ;
\draw[blue,thick] (z2) -- ++(.4,-.7)  node[round, color=blue, label = below:\textcolor{blue}{$\epsilon$}]  {} ;
\draw[blue,thick] (z4) -- ++(-.7,-.7)  node[round, color=blue, label = below:\textcolor{blue}{$o$}] {} ;
\draw[blue,thick] (z5) -- ++(-.7,-.7)  node[round, color=blue, label = below:\textcolor{blue}{$n$}] {} ;

\draw[blue,thick] (z8) -- ++(-.2,-.7)  node[round, color=blue, label = below:\textcolor{blue}{$e$}]   {};
\draw[blue,thick] (z8) -- ++(.1,-.7)  node[round, color=blue, label = below:\textcolor{blue}{$e$}]   {};
\draw[blue,thick] (z8) -- ++(.4,-.7)  node[round, color=blue, label = below:\textcolor{blue}{$e$}]   {};
\draw (z8) -- ++(1,-.7)  node[alpha]   {$e$};
\draw (z8) -- ++(1.3,-.7)  node[alpha]   {$e$};

\draw[blue,thick] (z9) -- ++(-.7,-.7)  node[round, color=blue, label = below:\textcolor{blue}{$o$}]  {} ;

\draw (z3) -- ++(.7,-.7)  node[alpha]  {} ;
\draw (z6) -- ++(.7,-.7)  node[alpha]  {} ;
\draw (z7) -- ++(.7,-.7)  node[alpha]  {} ;

\end{tikzpicture}}
\caption{A part of a factorization tree $T$. The paths $\tau_{i-1} = [v_{i-1,i},v_{i-1}]$ and $\tau_i = [v_{i-1,i}, v_i]$ are in red. 
Node names are in green.
Some of the nodes $u$ are labelled with their monoid elements $h(u) \in M$ ($m,n,o$ and the idempotents $e,\epsilon$).
The blue nodes in the gray zone are in general not leaves of $T$; the subtrees rooted in those nodes are omitted.
When the path $\pi'$ branches off from $\pi_{i-1}$ or $\pi$ into the gray zone, it can only take one of the dark blue edges.}
\label{fig part fact tree}
\end{figure}

Next, observe that the monoid elements $h(w[p_{i-1}+1, p_i-1]) \in M$ ($2 \le i \leq k$) together with the symbols $w[p_i]$ 
($1 \le i \leq k$) are enough to check whether $\succ_{w,1}(\bar{p})$ is defined.
Using these elements one can check,  whether there is an accepting run of the NFA $\calA'$
on the word $w \otimes \bar{p}$.
If there is such an accepting run, then $p' = \succ_{w,1}(\bar{p}) \in [1,n]$ is defined, otherwise
we must have $\bar{p} = \max_w$ and the computation of $F(w)$ terminates. 

Assume that $p' = \succ_{w,1}(\bar{p})$ is defined. We have to compute this position $p'$ and the corresponding path $\pi'$.
First, we find out to which of the intervals $\{p_1\}, [p_1+1, p_2-1], \{p_2\}, \ldots, [p_{k-1}+1, p_k-1], \{p_k\}$ 
the unknown position $p'$ belongs to (recall that we assume $p_1=1$ and $p_k = n$).
For this, the partition $Q = Q_{0} \uplus Q_{1}$ of the state set $Q$ of the NFA
$\calA'$ is important. If we find for instance an accepting run of 
the NFA $\calA'$ on $w \otimes \bar{p}$ that enters the interval  $[p_{i-1}+1, p_i-1]$ in a state from $Q_{0}$ and 
leaves $[p_{i-1}+1, p_i-1]$ in a state from $Q_{1}$ (the existence of such a run  can be deduced from $h(w[p_{i-1}+1, p_i-1])$)
then $p' \in [p_{i-1}+1, p_i-1]$. 

If it turns out that $p' = p_i$ for some $i \in [1,k]$ then we are done with step (iii). Therefore, 
assume that  $p_{i-1} < p' < p_i$.
The path $\pi' = [v_0,v']$ shares with $\pi_{i-1}$ and $\pi_i$ the prefix $[v_0,v_{i-1,i}]$.
Then $\pi'$ follows either $\tau_{i-1}$ or $\tau_i$, but at some point it will 
enter $\encl_{i-1,i}$
 (the path can already branch off at $v_{i-1,i}$ if $v_{i-1,i}$ has more than two children as in Figure~\ref{fig part fact tree}). 
 The node $u \in \tau_{i-1} \cup \tau_i$ from where $\pi'$ enters 
 $\encl_{i-1,i}$ can be determined from the monoid elements 
 $h(\ch_{i-1,i}(v))$ for $v \in \tau_{i-1} \cup \tau_i$ using again the partition $Q = Q_{0} \uplus Q_{1}$. 
When this node $u$ is determined we have to find in constant time the right child from $\ch_{i-1,i}(u)$ that is taken by the path $\pi'$.
If $u$ is an idempotent node, then the set $\ch_{i-1,i}(u)$ can have unbounded size.
The important observation here is the following: Assume that $u \in \tau_{i-1} \cup \tau_i$ is an idempotent node.
Then only the edge to the first or last (with respect to the child order $<_u$) node of $\ch_{i-1,i}(u)$
can be taken by $\pi'$ (unless $|\ch_{i-1,i}(u)|=3$, in which case all three nodes are possible),
and the correct alternative  can be found in time $\bigO(1)$ using the $h$-values of the nodes of $T$ 
and the partition $Q = Q_{0} \uplus Q_{1}$.
Assume for instance the situation shown in Figure~\ref{fig idempotent nodes} on the left occurs, where the unique position of
$y_1$ is located at a leaf below the third node from $\ch_{i-1,i}(u)$.
Hence, one can obtain an accepting run of the NFA 
$\calA$ (the NFA for $\Phi_{\succ,1}$) on the word $w$ for the situation on the left.
But then the same must also hold for the situation on the right in Figure~\ref{fig idempotent nodes},
because the second and third node from $\ch_{i-1,i}(u)$ 
have the same context in the monoid $M$ ($(ee, eee) = (e,e)$ on the left and $(e, eeee) = (e,e)$ on the right in Figure~\ref{fig idempotent nodes}) is the same.
Hence, we could also find for $y_1$ a position below the second (or fourth) 
node from $\ch_{i-1,i}(u)$.
This contradicts the uniqueness of the position of $y_1$. 
Note that this argument assumes that $\ch_{i-1,i}(u)$ consists of at least four nodes.
In Figure~\ref{fig part fact tree} the path $\pi'$ can only branch off into the gray zone to one of the dark blue nodes.
Once we found the right child of $u$ we can navigate down to the leaf $v'$ (where $y_1$ is located) using the same principle.

\begin{figure}[t]
 \tikzset{alpha/.style={inner sep = 1pt, fill=white}}
 \tikzset{round/.style={inner sep = 1.5pt, circle, fill=black}}
  \tikzset{empty/.style={inner sep = 0pt}}
\centering{
\begin{tikzpicture}
\draw (3,0) node[alpha, label = left:$u$] (z) {\textcolor{blue}{$e$}} ;
\draw (z) -- ++(-1.1,-.7)  node[alpha, label = left:$\ch_{i-1,i}(u)$]  (m1) {\textcolor{blue}{$e$}} ;
\draw (z) -- ++(-.8,-.7)  node[alpha]  (m2) {\textcolor{blue}{$e$}} ;
\draw (z) -- ++(-.5,-.7)  -- ++(.8,-1) -- ++(-1.6,0) node[midway,above] {$y_1$} -- ++(.8,1) node[alpha]  (m3) {\textcolor{blue}{$e$}} ;
\draw (z) -- ++(-.2,-.7)  node[alpha]  (m4) {\textcolor{blue}{$e$}} ;
\draw (z) -- ++(.1,-.7)  node[alpha]  (m5) {\textcolor{blue}{$e$}} ;
\draw (z) -- ++(.4,-.7)  node[alpha]  (m6) {\textcolor{blue}{$e$}} ;

\draw (6,0) node[alpha] (z) {\textcolor{blue}{$e$}} ;
\draw (z) -- ++(-1.1,-.7)  node[alpha]  (m1) {\textcolor{blue}{$e$}} ;
\draw (z) -- ++(-.8,-.7)  -- ++(.8,-1) -- ++(-1.6,0) node[midway,above] {$y_1$} -- ++(.8,1) node[alpha]  (m2) {\textcolor{blue}{$e$}} ;
\draw (z) -- ++(-.5,-.7)  node[alpha]  (m3) {\textcolor{blue}{$e$}} ;
\draw (z) -- ++(-.2,-.7)  node[alpha]  (m4) {\textcolor{blue}{$e$}} ;
\draw (z) -- ++(.1,-.7)  node[alpha]  (m5) {\textcolor{blue}{$e$}} ;
\draw (z) -- ++(.4,-.7)  node[alpha]  (m6) {\textcolor{blue}{$e$}} ;
\end{tikzpicture}}
\caption{The variable $y_1$ cannot be located at a leaf of $T$ as shown on the left, because then $y_1$ could be also
located at a leaf of $T$ as shown on the right, contradicting the uniqueness of the position of $y_1$. Similarly, the situation on the right is excluded, otherwise also
the situation on the left would be possible.}  \label{fig idempotent nodes}
\end{figure}

Updating the numbers $|u|_T^<$ while we navigate down to $v'$ is straightforward (using the precomputed leaf sizes $|v|_T$) 
when we move from a binary node $u$ to one of its children. 
When descending from an idempotent node $u$ to one of its children in $\ch_{i-1,i}(u) = \{u_1, u_2, \ldots, u_{l-1}, u_l\}$ (with $u_i <_u u_{i+1}$)
 we have seen that only $u_1, u_2, u_l$ are possible for the descent
(this also includes the case where $l = 3$). Moreover, $u_1$ (resp., $u_l$) is either the left-most (resp., right-most) child of $u$ in $T$
or the right (resp., left) sibling of another child $u'$ of $u$ that belongs to $\tau_{i-1}$ (resp., $\tau_i$).
In all cases, the $| \cdot |_T^<$-value of the new node can be easily computed from previously computed numbers. For instance, if the algorithms descends to 
$u_l$ and $u_l$ is the left sibling of a node $u' \in \tau_i$ then we can determine $|u_l|_T^<$ as $|u'|_T^< - |u_l|_T$ (note that $|u'|_T^<$ has been computed 
before since it belongs to $\pi_i$)
and if the algorithms descends to 
$u_2$ and $u_1$ is the right sibling of a node $u' \in \tau_{i-1}$ then we can determine $|u_2|_T^<$ as $|u'|_T^< + |u'|_T + |u_1|_T$.
This concludes the outline of steps (iii) and (iv) (from the beginning of the proof of Theorem~\ref{thm main}) for the uncompressed case.

The other steps are in fact easier. For step (ii) we transform each of the formulas $\Phi_b$ ($b \in \Gamma$) from our fixed MSO-interpretation
 into an NFA $\calA_b$ and take the monoid $M_b = \Bool(\calA_b, \Sigma \times \{\emptyset\})$ and the homomorphism
 $h_b : \Sigma^* \to M_b$ defined by
 $h_b(a) = B_{b,a}$, where $B_{b,a}$ is the boolean matrix corresponding
to the state transformation induced by the letter $(a,\emptyset)$ in the NFA $\calA_b$. Then, in order to check 
whether $w \models \Phi_b(\bar p)$ holds, it suffices to compute all the monoid elements 
$h_b(w[p_{i-1}+1, p_i-1]) \in M_b$ ($2 \le i \leq k$) using a factorization tree for $w$ with respect to $h_b$.
This task has been solved for step (iii) with $h$ instead of $h_b$.
Finally, step (i) can be solved in the same way as step (iii):
We compute each component of the tuple $\min_w$ using the formula $\Phi_{\min,i}(x_i) = \exists x_1, \ldots, x_{i-1}, x_{i+1}, \ldots, x_k : \Phi_{\min}$.
 For this one can use the algorithm from step (iii). The only difference is 
 that there are no positions $p_i$ and paths $\pi_i$ to start with. Formally, one can simply take $k=0$ in the algorithm for step (iii). 

Instead of working with a separate factorization tree for each of the MSO-formulas $\Phi_{\min,i}$, $\Phi_{\succ,i}$ ($1 \le i \le k$), $\Phi_b$
($b \in \Gamma$) it is better to take the direct product 
of the above monoids $M, M_b$, etc., and the corresponding homomorphism from $\Sigma^*$ into this direct product.
Then one can work with a single factorization tree $T$. This has the advantage that the path to the new position $p'$ computed 
in step (iii) does not have to be inserted into the factorization trees for the other MSO-formulas.  This will be crucial when
$w$ is given by an rSLP.

\subparagraph{The dag-compressed setting.}
Our final goal is to adapt the above approach to the setting where the string $w$ is 
given by an rSLP $\calG$. As an intermediate step, let us consider the case, where the factorization tree
$T$ for $w$ is represented succinctly by a rooted dag, where isomorphic subtrees of $T$ are identified. 
To represent the child orders $<_u$, one can label the outgoing edges of a dag node by $1,2,\ldots, k$
where $k$ is the number of children (for the binary dag $\calD(\calG)$ of 
an SLP these edge labels are $\ell$ and $r$).
We write $\calD$ for this dag and its root is denoted with $r$.

Consider for instance the factorization tree $T$ from Figure~\ref{fact tree}. Then a dag compression of $T$ is shown in 
Figure~\ref{dag fact tree} (it is in fact the best dag compression in the sense that all nodes of $T$, where isomorphic subtrees
are rooted are merged in the dag).
We store $\calD$ by keeping for every dag node $u$ a doubly linked adjacency list of the children of $u$
with additional pointers from $u$ to the first and last node of this list.

\begin{figure}[t]
 \tikzset{alpha/.style={inner sep = 1pt, fill=white}}
 \tikzset{round/.style={inner sep = 1.5pt, circle, fill=black}}
  \tikzset{empty/.style={inner sep = 0pt}}
\centering{
\begin{tikzpicture}
\node[round,magenta] (0) {} ;
\node[round, magenta, below left = 1 and 2 of 0] (1) {};
\node[round, darkgreen, below right = 1 and 2 of 0] (2) {};
\node[round, darkgreen, below = 1 of 1] (3) {};
\node[alpha, below right = 1 and 1 of 1] (d) {\textcolor{magenta}{$d$}};

\node[alpha, below = 2 of 3] (4) {\textcolor{magenta}{$c$}};
\node[round, magenta, below = 3 of 3] (5) {};
\node[alpha, below left = 1 and .5 of 5] (a) {\textcolor{magenta}{$a$}};
\node[alpha, below right = 1 and .5 of 5] (b) {\textcolor{magenta}{$b$}};
\node[alpha, below right = 1 and .6 of 2] (c) {\textcolor{magenta}{$e$}};

\draw[->] (0) to node[above = -.6mm]{$\scriptstyle{1}$} (1);
\draw[->] (0) to node[above = -.6mm]{$\scriptstyle{2}$} (2);

\draw[->] (1) to node[left = -.6mm]{$\scriptstyle{1}$} (3);
\draw[->] (1) to node[above = -.6mm]{$\scriptstyle{2}$} (d);

\draw[->] (3) to[bend right =20] node[left = -1mm]{$\scriptstyle{2}$} (4);
\draw[->] (3) to[bend left=20]  node[right = -1mm]{$\scriptstyle{3}$} (4);
\draw[->] (3) to[bend right =40] node[left = -1mm]{$\scriptstyle{1}$} (5);
\draw[->] (3) to[bend left = 40]  node[right = -1mm]{$\scriptstyle{4}$} (5);

\draw[->] (5) to node[left = -.6mm]{$\scriptstyle{1}$} (a);
\draw[->] (5) to node[right = -.6mm]{$\scriptstyle{2}$} (b);

\draw[->] (2) to node[above = -.6mm]{$\scriptstyle{1}$} (5);
\draw[->] (2) to[bend left = 30] node[below = -.6mm]{$\scriptstyle{2}$} (4);
\draw[->] (2) to[bend left = 50] node[below = -.6mm]{$\scriptstyle{3}$} (5);
\draw[->] (2) to node[right = -.6mm]{$\scriptstyle{4}$} (c);

\end{tikzpicture}}
\caption{A dag compression of the factorization tree from Figure~\ref{fact tree}. Binary nodes are in magenta, idempotent nodes are in green.}  \label{dag fact tree}
\end{figure}

We claim that the approach for the uncompressed setting extends to the dag-compressed setting 
with only a few minor modifications.
Nodes of $T$ are in one-to-one correspondence with paths in $\calD$ that start in the root and end in a leaf (we speak of root-leaf paths in the following).
Such a path can be identified with the sequence of edge labels which yields a lexicographic order on paths. 
The algorithm follows the approach from the uncompressed setting to compute the path $\pi'$ to 
the unique position of the variable $y_1$ in the word $w$ from the paths $\pi_1, \ldots, \pi_k$ (that lead to the positions
of the variables $x_1, \ldots, x_k$). In the dag-compressed setting, these paths are of course paths in the dag
$\calD$. 

Recall that in the uncompressed setting the computation of the monoid elements from $M$ and the decision on the branching direction
when descending along the path $\pi'$ are solely based on the precomputed monoid elements $h(u)$ for the nodes $u$ of $T$.
Now, the nodes of $T$ are paths $\pi$ that start at the root of $\calD$; so we should write $h(\pi)$ instead of $h(u)$. In general
the number of paths in $\calD$ is exponential in the size of $\calD$; so we cannot afford to compute all values $h(\pi)$.
But this is not necessary, since $h(\pi)$ only depends on the last node of $\pi$.
More precisely, for every node $v$ of $\calD$ we can precompute in time $\bigO(|\calD|)$ 
a monoid element $h(v)$ as follows: If $v$ is a leaf labelled with the symbol $a \in \Sigma$ then $h(v) = h(a)$ and if $v$ has
$m \geq 1$ outgoing edges 
$(v, i, v_i)$ ($1 \leq i \leq m$) then $h(v) = h(v_1) h(v_2) \cdots h(v_m)$ in $M$. 
Then for every path $\pi$  in $\calD$ that goes from the root to
 $v$ we have $h(\pi) = h(v)$ (where $\pi$ is viewed here as a node of the tree $T$).
The same remark also applies to the leaf size $|\pi|_T$ for a node $\pi$ of $T$. Recall that
this is the number of leaves in the subtree rooted at $\pi$. Also these values only depend on the last
node of $\pi$ and we can precompute numbers $|v|_{\calD}$ (for $v$ a node of $\calD$) by
$|v|_{\calD}=1$ for a leaf $v$ of $\calD$ and 
$|v|_{\calD} = |v_1|_{\calD} + |v_2|_{\calD} + \cdots + |v_m|_{\calD}$ if $v$ has $m \geq 1$ outgoing edges $(v,i,v_i)$.
Using these monoid elements $h(v)$ and the numbers $|v|_{\calD}$ the computation of the path $\pi'$
 from the uncompressed setting
can be extended without modifications to the dag-compressed setting. Thereby we also compute
for every prefix $\tau$ of $\pi'$  the value  $|\tau|_T^<$; see \eqref{v_T<}. Note that $|\tau|_T^<$ is the number of root-leaf paths
in the dag $\calD$ that are lexicographically strictly smaller than $\tau$.

\subparagraph{The SLP-compressed setting.}
We finally adapt the above approach for the dag-compressed setting to the SLP-compressed setting, where $w$ is given by an rSLP $\calG$. 
Let $w = \valA{\calG}$ and $N = |w|$ in the following. We only consider step (iii) from the beginning of the proof in detail.
In the preprocessing  we replace, using Theorem~\ref{thm-Simon-SLP}, the rSLP $\calG$ by an equivalent rSSLP with respect to the above homomorphism $h : \Sigma^* \to M$. We denote this rSSLP again with $\calG$. So, in the following, $\calG = (V,\rho,S)$ is an rSSLP with $\valA{\calG}=w$.
For an idempotent $e \in E(M)$ let $\calG_e = (V_e, \rho\rest_{V_e})$ be the $e$-part of $\calG$ (see Section~\ref{sec SSLP}).
In time $\bigO(|\calG|)$ we precompute for every variable $A \in V$ the monoid element 
$h(A) := h(\valGA{\calG}{A}) \in M$ as well as the length $|A|_{\calG}$ of the string produced by $A$.

The rSSLP $\calG$ is quite close to a dag $\calD$ for a factorization tree $T$ (as considered in the paragraph on the dag-compressed setting).
The only difference is that in the rSSLP $\calG$ an idempotent node $v$ of $\calD$ (i.e., a node with more than two children) 
becomes a variable of the $e$-part $\calG_e$  for the idempotent $e = h(v)$. Hence, a node of $\calD$ with more than two children
 is replaced by a binary subdag. 
 Using the data structure from Section~\ref{sec-traversal} we can emulate the algorithm for the dag-compressed setting on the rSSLP $\calG$.
 Let us elaborate this in more detail.

As in the dag-compressed setting,
the paths $\pi_i$ in the factorization tree $T$ from the uncompressed setting are replaced by terminal paths $\pi_i \in \Path(S)$  
in the dag $\calD(\calG)$. In contrast to the dag-compressed setting (where the dag has depth at most $3 |M|$ -- the same as the factorization tree to which it unfolds)
the paths $\pi_i$ in the rSLP $\calG$ have unbounded length.
In order to be able to do all manipulations in constant time,
we use the data structure from Section~\ref{sec-traversal} for maximal subpaths of $\pi_i$ that belong to an $e$-part $\calG_e$.

Consider a tuple $\bar\pi = (\pi_1, \ldots, \pi_k)$ of paths  $\pi_i \in \Path(S)$ in the dag $\calD(\calG)$.
 Let $\bar{p} = (p_1, \ldots, p_k)$ be the corresponding tuple of positions $p_i = \pos_S(\pi_i)$ in the word $w$.
 We assume that $\bar{p}$ is also computed and that it belongs to the domain of $\succ_w$.
 Our goal is to compute the position $p' = \succ_{w,1}(\bar{p})$ and the corresponding path $\pi' \in \Path(S)$.

 For every idempotent $e \in E(M)$ let $\Path(\bar\pi,e)$ be the set of all terminal paths $\sigma \in\Path(\calG_e)$ in $\calG_e$ 
 such that for some $i \in [1,k]$ there is a factorization $\pi_i = \tau_1 \sigma \tau_2$ (where $\tau_1$ or $\tau_2$ can be also empty), but
 if if $\tau_1$ is non-empty and ends with the edge $(A,d,B)$ then $(A,d,B) \tau \notin\Path(\calG_e)$.\footnote{The same is also true if we add
 to $\sigma$ the first edge of $\tau_2$ since $\sigma$ is required to be a terminal path in $\calD(\calG_e)$.} 
 In other words, $\sigma$ is a longest subpath in some $\pi_i$ under the restriction that it is a path in $\calG_e$.
For every variable $A$ of $\calG_e$ let $\Path(\bar\pi,A)$ be those paths in $\Path(\bar\pi,e)$ that start in the variable $A$
 (we do not have to mention $e$ in $\Path(\bar\pi,A)$, since $A$ determines the idempotent $e = h(A)$).
 
 For every variable $A$ of $\calG_e$ such that $\Path(\bar\pi,A) \neq \emptyset$
 we will store the paths in $\Path(\bar\pi,A)$ by a tree $\calB_A$ as described in 
Section~\ref{sec-traversal}. Note that the union of all sets $\Path(\bar\pi,A)$ is bounded by $3 |M| k$:
Each of the $k$ paths $\pi_i$ contains at most $3|M|$ maximal subpaths from some $e$-part $\calG_e$, since the idempotent
contracted length of $\pi_i$ is bounded by $3|M|$.
Hence, we will store at most $3 |M| k$ many trees $\calB_A$. 

With the help of the $\calB_A$ we store every path $\pi_i$ by replacing in $\pi_i$ every maximal subpath 
$\sigma \in \Path(\bar\pi,A)$ (such subpaths cannot overlap in $\pi_i$) by an edge labelled with the pair $(A,v)$
that represents the path  $\sigma$ in the sense of Section~\ref{sec-traversal} (in particular, $v$ is a leaf of $\calB_A$).
We denote the resulting contracted path by $\tilde\pi_i$ and call it the \emph{contracted representation} of $\pi_i$.
 Its length (i.e., number of edge triples $(A,d,B)$ and pairs $(A,v)$)
is at most $3 |M|$, since $\calG$ is an rSSLP
and hence has idempotent-contracted depth at most $3 |M|$. 
The paths stored in the trees $\calB_A$ correspond in the dag-compressed setting to edges in a path $\pi_i$ that lead from an idempotent node to 
one of its children. In Figure~\ref{fig:id-contract}, the shortcuts from $A_3$ to $A_5$ and from $A_7$ to $A_9$ 
would be placed by pairs $(A_3,v)$ and $(A_7,v')$ for suitable leaves $v$ and $v'$ in $\calB_{A_3}$ and $\calB_{A_7}$, respectively.

\begin{figure}[t]
 \tikzset{alpha/.style={inner sep = 1pt, fill=white}}
 \tikzset{round/.style={inner sep = 1.5pt, circle, fill=black}}
\centering{
\begin{tikzpicture}
\draw[red] (0,0) node[alpha] (1) {$a$} 
           -- ++(-1,1) node[pos=.6,right] {$r$} node[alpha]  (2) {$F$} ++(0,2) node[alpha]  (3) {$E$} ;         
           
\draw[decorate,decoration=snake,red]  (3) --   node[midway,left] {$u$} (2) ;
\draw[decorate,decoration=snake,blue] (3) --++ (.5,-2) node[alpha]  {$F_1$};
\draw[decorate,decoration=snake,lightblue] (3) --++ (1,-2) node[alpha]  {$F_2$};
\draw[decorate,decoration=snake,lightblue] (3) --++ (1.5,-2) node[alpha]  {$F_3$};
\draw[decorate,decoration=snake,blue] (3) --++ (2,-2) node[alpha]  {$F_4$};

\draw[red] (3) -- ++(1,1) node[midway,left] {$\ell$} node[alpha]  (4) {$D$}
  -- ++(1,1) node[pos=.6,left] {$\ell$} node[alpha]  (5) {$B$};
  
\draw[red] (5) ++(1.4,1.4) node[alpha]  (6) {$A$} ;
\draw (6)-- ++(1,1) node[pos=0.5,left] {$\ell$} node[alpha]  (7) {$S$} ;
\draw[decorate,decoration=snake,red]  (6)  -- node[pos=.3,left=1mm] {$v$} (5);
\draw[decorate,decoration=snake,red]  (6) -- ++(1.4,-1.4)  node[pos=.3, right= 2mm] {$v'$} node[alpha]  (8) {$C$} ;  

\draw[decorate,decoration=snake,blue] (6) --++ (-.7,-1.4) node[alpha]  {$A_1$};
\draw[decorate,decoration=snake,blue] (6) --++ (0,-1.4) node[alpha]  {$A_2$};
\draw[decorate,decoration=snake,blue] (6) --++ (.7,-1.4) node[alpha]  {$A_3$};

\draw[red] (8) -- ++(1,-1) node[pos=0.3,right] {$r$} node[alpha]  (9) {$E$} ;
\draw[decorate,decoration=snake,red]  (9) -- ++(0,-3) node[midway,right] {$u'$}  node[alpha]  (10) {$G$} ;
\draw[decorate,decoration=snake,blue]  (9) -- ++(-.5,-3)   node[alpha]  (11) {$G_1$} ;
\draw[decorate,decoration=snake,lightblue]  (9) -- ++(-1,-3)   node[alpha]  (11) {$G_2$} ;
\draw[decorate,decoration=snake,lightblue]  (9) -- ++(-1.5,-3)   node[alpha]  (11) {$G_3$} ;
\draw[decorate,decoration=snake,blue]  (9) -- ++(-2,-3)   node[alpha]  (11) {$G_4$} ;
\draw[red] (10) -- ++(1,-1) node[pos=0.3,right] {$r$} node[alpha]  (11) {$b$} ;
\draw[blue] (4)  -- ++(1,-1) node[pos=0.3,right] {$r$} node[alpha]  {$D'$} ; 
\draw[blue] (5)  -- ++(1,-1) node[pos=0.3,right] {$r$} node[alpha]  {$B'$} ; 
\draw[blue] (8)  -- ++(-1,-1) node[pos=0.3,left] {$\ell$} node[alpha]  {$C'$} ; 
\draw[blue] (10)  -- ++(-1,-1) node[pos=0.3,left] {$\ell$} node[alpha]  {$G'$} ; 
\end{tikzpicture}}
\caption{The contracted path $\tilde{\pi}_{i-1}$ goes from $S$ down to the $a$-labelled leaf, and the 
contracted path $\tilde{\pi}_{i}$ goes from $S$ down to the $b$-labelled leaf on the bottom right.}  \label{fig compute h(w)}
\end{figure}

\begin{example}
Figure~\ref{fig compute h(w)} shows an example for the contracted paths 
\begin{eqnarray}
\tilde{\pi}_{i-1} &=& (S,\ell,A) (A,v) (B,\ell, D) (D,\ell,E) (E,u) (F,r,a) \text{ and } \label{pair (E,i)}\\
\tilde{\pi}_{i} &=& (S,\ell,A) (A,v') (C,r,E) (E,u') (G,r,b)
\end{eqnarray}
with $v \neq v'$. The straight edges in Figure~\ref{fig compute h(w)} are edges in the dag $\calD(\calG)$
and the curly edges stand for paths from $\bigcup_{e \in E(M)} \Path(\calG_e)$.
The red $u$-labelled path from $E$ to $F$ is represented by the path from the 
 root to the leaf $u$ in the tree $\calB_E$, and similarly for the other red curly edges. 
 The curly paths from $E$ to $F_i$ ($1 \leq i \leq 4$) are the paths from $\Path(\calG_e,E)$ (assuming that $E \in V_e$)
 that are lexicographically larger than the path from $E$ to $F$ and the lexicographic order on these paths corresponds
 to the left-to-right order in  Figure~\ref{fig compute h(w)}.\footnote{Note that the variables $F, F_1, \ldots, F_4$ are not necessarily distinct.}
Analogous statements hold for the curly paths that descend from $A$ and the $E$ on the right. 
  \end{example}
Using the trees $\calB_A$, we can now emulate
the algorithm for the dag-compressed setting. Note that in the dag-compressed setting we assumed that the children of an idempotent dag node $u$ are 
stored in a doubly linked adjacency list. 
This allows to go from a child $u'$ of $u$ to its left or right sibling, which is needed when we branch off
from one the paths $\pi_i$; see also Figure~\ref{fig part fact tree} (which refers to the uncompressed setting, but the situation for the dag-compressed setting
is the same), where the path $\pi'$ can branch off from $\pi_{i-1}$ or $\pi_i$ along one of the dark blue edges.  
For instance, in Figure~\ref{fig part fact tree} the path $\pi'$ might branch off from $\pi_{i-1}$ at the idempotent node $u$ and 
 descend to the right sibling of $u'$. Also notice that the nodes $u$ and $u'$
have been visited before, since they belongs to $\pi_{i-1}$ or $\pi_i$. 
For the SLP-compressed setting, this means that during the computation of the new path $\pi'$, the latter may
branch off from some $\pi_i$ along a \emph{terminal path} $\tau' \in \Path(\calG_e,A)$ 
such that $\tau'$ is the lexicographic predecessor or successor
 (with respect to the lexicographic order on $\Path(\calG_e,A)$)
of a path that is already represented in $\calB_A$.
In Figure~\ref{fig compute h(w)} the path $\pi'$ may for instance branch off at $A$ and continue the dark blue path to $A_1$.
Its lexicographical predecessor is the red curly path from $A \in V_e$ to $B$ in $\calD(\calG_e)$,
which is already represented by the pair $(A,v)$ in the tree $\calB_A$. Hence, the call
$\mathsf{right}(A,v)$ returns a potentially new leaf $v'$ in the new tree $\calB_A$
such that the pair $(A,v_1)$ represents
in the new tree $\calB_A$ (after the call $\mathsf{right}(A,v)$) the blue path from $A$ to $A_1$ in Figure~\ref{fig compute h(w)}.
At this point we have computed the prefix $(S,\ell,A) (A,v_1)$ of $\tilde\pi'$ (the contracted representation of the path $\pi'$).

In general, the path $\pi'$ can only branch off from $\pi_{i-1} \cup \pi_i$ along one the dark blue edges in 
Figure~\ref{fig compute h(w)} (assuming that $p_{i-1} < p' < p_i$). The case where $\pi'$ branches off along the curly edge from $E$ to $F_4$ for instance
is easy, since this edge stands for the rightmost path in $\Path(\calG_{e'},E)$ (assuming $E \in V_{e'}$). This path is already represented in the tree $\calB_E$
(we assumed in Section~\ref{sec-traversal} that the leftmost and rightmost path in the SLP are always represented in the tree $\calB$).

We have now shown that after a $\bigO(|\calG|)$-time preprocessing we can compute the mapping $\succ_{w}$ (for $w = \valA{\calG}$) in constant time.
 More precisely, given a tuple $\bar{p} = (p_1, \ldots, p_k)$ of positions in $w$ and the contracted representations of the paths $\pi_i$ (the lexicographically $p_i$-th path
 in $\Path(\calG,S)$), our algorithm computes in constant time the new tuple $\bar{p}'$ of positions
 in $w$. Moreover, the algorithm also computes the contracted representations of the paths $\pi'_i$ that correspond to the new positions $p'_i$ 
 and updates the trees $\calB_A$ accordingly. It is now important to remove those leaves in the trees $\calB_A$ that are not
 needed for the contracted representations of the paths $\pi'_i$, otherwise the trees $\calB_A$ would grow unboundedly. 
 For this we have to do a constant number (at most $3 |M| k$ many) 
 delete operations as explained at the end of Section~\ref{sec-traversal}.
 
 Finally, recall the last paragraph from the uncompressed setting, where we argued that one should take the direct product of all the monoids for the various
 MSO-formulas $\Phi_{\min,i}$, $\Phi_{\succ,i}$ ($1 \le i \le k$), $\Phi_b$
($b \in \Gamma$)
 and compute a single factorization tree for this direct product. 
 For the SLP-compressed setting this means that we work with a single rSSLP. This
is crucial. If one would work with several rSSLPs (one for each of the above MSO-formulas), then after computing the new path $\pi'$ to the new position
$p'$ in the rSSLP for $\Phi_{\succ,1}$, one would have to compute in the rSSLP for $\Phi_{\succ,2}$ the lexicographically $p'$-th terminal path.
It is not clear how to do this in constant time, even if one works with contracted paths.
 
This finally concludes the proof of Theorem~\ref{thm main}. 
\end{proof}
By removing the position tuples $\bar{p}$ from the function $F$ in Theorem~\ref{thm main}, we obtain the following corollary:
\begin{corollary} \label{coro polyreg}
Fix a polyregular function $f : \Sigma^* \to \Gamma^*$.
The enumeration problem that maps an rSLP $\calG$ over the terminal alphabet $\Sigma$ to the word 
$f(\valA{\calG})$ can be solved after linear preprocessing in constant delay.
\end{corollary}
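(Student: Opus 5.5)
The corollary follows from Theorem~\ref{thm main} by projecting away the position tuples. Fix a polyregular $f : \Sigma^* \to \Gamma^*$. By the characterization of \cite{BojanczykKL19} recalled in Section~\ref{sec-polyreg}, the restriction of $f$ to $\Sigma^+$ is an MSO-string-to-string interpretation given by formulas $\Phi(\bar x)$, $\Phi_<(\bar x,\bar y)$ and $\Phi_b(\bar x)$ for $b \in \Gamma$. Let $F$ be the associated function from Section~\ref{sec-main}. Then $F(w) = (b_1,\bar p_1)\cdots(b_m,\bar p_m)$ and, by construction, $f(w) = b_1 b_2 \cdots b_m$. This is because $\bar p_1 = \min_w$, each $\bar p_{i}$ is the $\Phi_<$-successor of $\bar p_{i-1}$, and $b_i = \gamma(\bar p_i)$ is exactly the label of the $i$-th position of the word structure defining $f(w)$.

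The enumeration algorithm has two phases. On input an rSLP $\calG$, run the preprocessing of Theorem~\ref{thm main}, which takes time $\bigO(|\calG|)$. Then run its enumeration phase, and whenever it would output a pair $(b_i,\bar p_i)$, output only $b_i$. Since projection costs constant time per output item, the delay stays constant and the preprocessing stays linear. The final $\mathsf{end}$ symbol is output exactly when the original algorithm outputs it.

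A few degenerate cases need separate handling, but none is a real obstacle. First, $f$ is defined on $\Sigma^*$ while rSLPs in normal form produce words of length at least two (Lemma~\ref{lemma-normalization}). If $|\valA{\calG}| \le 1$, which can be detected in linear time during normalization, then $\valA{\calG}$ is one of finitely many words. In that case output the precomputed constant word $f(\valA{\calG})$ symbol by symbol. Second, the possibility $f(w)=\varepsilon$ is covered: if $\valA{\Phi}_w=\emptyset$, the algorithm outputs only $\mathsf{end}$. Detecting emptiness is handled in the same way as step (i) of the proof of Theorem~\ref{thm main}, using the Boolean query $\exists \bar x\,\Phi$, whose truth value can be computed from $h(S)$ in a suitable product monoid. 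Third, the positions stored internally fit into registers by the standing assumption. The only conceptual point is the correct identification of $f(w)$ with the first components of $F(w)$, which is immediate from the definitions.
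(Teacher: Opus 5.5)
Your proposal is correct and follows the paper's argument: the paper also derives the corollary by running the algorithm of Theorem~\ref{thm main} and dropping the position tuples $\bar p_i$ from the output sequence $F(w)$. Your separate handling of the degenerate cases is sound and fills in details that the paper's one-line proof leaves implicit. These cases are inputs of length at most one, handled by a finite lookup, and an empty domain $\valA{\Phi}_w$, detected from the monoid value of the start variable.
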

By removing the letters $b \in \Gamma$ (or taking a unary alphabet $\Gamma$) we obtain the following corollary of Theorem~\ref{thm main}:
 \begin{corollary}
 Fix MSO-formulas $\Phi(x_1, \ldots, x_k)$ and $\Phi_{<}(x_1, \ldots, x_k,y_1,\ldots,y_k)$ such that for every
 word $w \in \Sigma^*$ the relation $\valA{\Phi_<}_w$ is a linear order on $[1,|w|]^k$. 
 Then, the enumeration problem that maps an rSLP $\calG$ to the sequence of all tuples from $\valA{\Phi}_{\valA{\calG}}$ ordered
 with respect to $\valA{\Phi_{<}}_{\valA{\calG}}$ can be solved after linear preprocessing in constant delay.
 \end{corollary}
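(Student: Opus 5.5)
The corollary follows from Theorem~\ref{thm main} by an MSO-interpretation whose output alphabet is trivial. Take $\Gamma = \{\#\}$ and let $\Phi_\#(\bar x) := \Phi(\bar x)$. We keep $\Phi$ as the domain formula. For the order formula we use the restriction $\Phi'_<(\bar x,\bar y) := \Phi_<(\bar x,\bar y) \wedge \Phi(\bar x) \wedge \Phi(\bar y)$.

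First we check that this is an MSO-string-to-string interpretation in the sense of Section~\ref{sec-polyreg}. Both implications required in the first item hold syntactically. Since $\valA{\Phi_<}_w$ is a linear order on $[1,|w|]^k$, its restriction to $\valA{\Phi}_w$ is a linear order on $\valA{\Phi}_w$. Hence $(\valA{\Phi}_w, \valA{\Phi'_<}_w, \valA{\Phi_\#}_w)$ is a word structure over $\{\#\}$, namely $\#^{m}$ with $m = |\valA{\Phi}_w|$.

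One caveat is that word structures were defined only for nonempty words. So the case $\valA{\Phi}_w = \emptyset$ needs separate treatment. We handle it with a boolean check in preprocessing: $\exists \bar x\, \Phi(\bar x)$ is a fixed MSO sentence, so it can be decided in time $\bigO(|\calG|)$. For example, compute $h(\valA{\calG})$ bottom-up for the monoid of an automaton for that sentence. If the check fails, we output only $\mathsf{end}$.

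Otherwise, the sequence $F(\valA{\calG})$ of Theorem~\ref{thm main} is $(\#,\bar p_1)\cdots(\#,\bar p_m)$. Here $\bar p_1 = \min_w$, and $\bar p_{i}$ is the successor of $\bar p_{i-1}$ in the restricted order. Thus the $\bar p_i$ list $\valA{\Phi}_w$ in increasing $\valA{\Phi_<}_w$-order, each tuple exactly once. Theorem~\ref{thm main} enumerates this sequence after linear preprocessing with constant delay. Dropping the constant first component of each output pair costs $\bigO(1)$ per element, which gives the claim.

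The only point needing care is that Theorem~\ref{thm main} was stated for interpretations with $\Phi_<$ restricted to the domain. The conjunction above guarantees this, and nowhere else in that proof does the form of $\Phi_<$ matter, since only the derived formulas $\Phi_{\min}$, $\Phi_{\succ,i}$ and $\Phi_b$ are used. Inputs producing words of length below two are handled by the same normalization as in Lemma~\ref{lemma-normalization}, with a trivial direct treatment of such words.
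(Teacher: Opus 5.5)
Your proposal is correct and follows the paper's proof: you apply Theorem~\ref{thm main} to the interpretation with a unary output alphabet, domain formula $\Phi$, and order formula $\Phi'_<(\bar x,\bar y) = \Phi_<(\bar x,\bar y) \wedge \Phi(\bar x) \wedge \Phi(\bar y)$. Your extra treatment of the case $\valA{\Phi}_w = \emptyset$ (a linear-time bottom-up check of the sentence $\exists \bar x\, \Phi(\bar x)$) and of very short words is a sound refinement that the paper's one-line argument leaves implicit.
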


 \begin{proof}
 We apply Theorem~\ref{thm main} with a unary alphabet $\Gamma$ and the MSO-interpretation that is given by the formulas 
 $\Phi(\bar{x})$ for the domain and $\Phi'_{<}(\bar{x},\bar{y}) = \Phi_{<}(\bar{x},\bar{y}) \wedge \Phi(\bar{x}) \wedge \Phi(\bar{y})$
 for the linear order.
 \end{proof}
 The reader might have observed that our proof of Theorem~\ref{thm main}
also works, if we directly start with a formula  $\Phi_{\succ}(\bar{x},\bar{y})$
 for a successor relation on $k$-tuples instead of
 a formula  $\Phi_{<}(\bar{x},\bar{y})$ for a linear order on $k$-tuples. This leads to the notion of 
 \emph{successor-MSO string-to-string interpretations} \cite{BojanczykKL19}, which
 are the string-to-string functions computed by MSO-interpretations, assuming that a string $a_1 a_2 \cdots a_n$ is represented by its
 successor structure $([1,n], p \mapsto p+1 (1 \le p < n), (P_a)_{a \in \Sigma})$.
 Hence, we obtain:
 \begin{corollary} \label{coro polyreg-ext}
Fix a successor-MSO string-to-string interpretation  $f : \Sigma^* \to \Gamma^*$.
The enumeration problem that maps an rSLP $\calG$ over the terminal alphabet $\Sigma$ to the word 
$f(\valA{\calG})$ can be solved after linear preprocessing in constant delay.
\end{corollary}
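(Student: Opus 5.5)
The plan is to rerun the proof of Theorem~\ref{thm main} with the ordering formula $\Phi_<$ replaced throughout by the given successor formula. A successor-MSO interpretation $f$ comes with formulas $\Phi(\bar x)$, $\Phi_b(\bar x)$ for $b \in \Gamma$, and $\Phi_{\succ}(\bar x,\bar y)$ defining a successor relation on $\valA{\Phi}_w$ that makes it a word structure. These formulas are evaluated over the successor structure of $w$. Since the order $<$ on positions is MSO-definable from the successor relation (as the transitive closure of $+1$ via a set quantifier), every such formula is equivalent to an MSO-formula over the ordinary word structure. So we may assume that all formulas are MSO-formulas in the sense of Section~\ref{sec-MSO}.

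Next I would define $\Phi_{\min}(\bar x) = \Phi(\bar x) \wedge \neg\exists \bar z : \Phi_{\succ}(\bar z,\bar x)$ and $\Phi_{\max}(\bar x) = \Phi(\bar x) \wedge \neg\exists \bar z : \Phi_{\succ}(\bar x,\bar z)$. I would also define $\Phi_{\succ,i}$ exactly as before, by existentially projecting away all components of $\bar y$ except $y_i$. Because $\valA{\Phi_\succ}_w$ is the successor relation of a word structure, these formulas define the same objects as before: a unique $\min_w$, a unique $\max_w$, partial functions $\succ_w$ and $\succ_{w,i}$ that are defined exactly off $\max_w$, and the labelling $\gamma$.

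The key check is to go through the proof of Theorem~\ref{thm main} and confirm that $\Phi_<$ was used only to derive $\Phi_{\min}$, $\Phi_{\max}$, $\Phi_{\succ}$ and $\Phi_{\succ,i}$. All the later machinery uses only the NFAs for $\Phi_{\succ,1}, \ldots, \Phi_{\succ,k}$, $\Phi_{\min,i}$ and $\Phi_b$, the product monoid $M$, the rSSLP from Theorem~\ref{thm-Simon-SLP}, and the trees $\calB_A$ of Section~\ref{sec-traversal}. That machinery needs only two facts: that each formula is MSO over words, and that its free output variable $y_1$ (or $x_i$ for $\min$) has a unique witness whenever one exists. Uniqueness is what yields the partition $Q = Q_0 \uplus Q_1$ and the argument that the new path branches off only into the first or last enclosed child of an idempotent node. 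Both facts hold here, so steps (i)--(iv) go through verbatim, and enumerating $F(\valA{\calG})$ and then dropping the tuples $\bar p$ outputs $f(\valA{\calG})$ with linear preprocessing and constant delay.

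The main point needing care is this uniqueness and functionality, because $\Phi_\succ$ is now given directly rather than derived. The fix is that the definition of a successor-MSO interpretation itself requires $\valA{\Phi_\succ}_w$ to be the successor relation of a word structure on $\valA{\Phi}_w$, so $\Phi_{\succ}$ is functional and injective with a unique first and last element. Beyond that, the only possible issue is words of length at most one, which are excluded anyway by Lemma~\ref{lemma-normalization}.
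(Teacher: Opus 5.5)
Your proposal is correct and takes the paper's route. The paper also just observes that the proof of Theorem~\ref{thm main} goes through verbatim when you start from the given successor formula $\Phi_{\succ}$ rather than deriving it from $\Phi_<$, because only the functionality and uniqueness of the successor (and of the minimal and maximal tuples) are used. Your extra remarks, translating input formulas over the successor structure into ordinary MSO over words and defining $\Phi_{\min}$, $\Phi_{\max}$ via $\Phi_{\succ}$, make explicit what the paper leaves implicit; one small correction is that words of length at most one are not excluded by Lemma~\ref{lemma-normalization} but are trivially handled directly in constant time.
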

The class of successor-MSO string-to-string interpretations is strictly larger than the class of polyregular functions. Moreover, in contrast
to the polyregular functions, the class of successor-MSO string-to-string interpretations is not closed under composition and membership 
in inverse images of regular languages is undecidable \cite{BojanczykKL19}.

\section{Future work}

We proved that for an MSO-definable order on $k$-tuples, the ranked enumeration problem for a fixed MSO-query and a given grammar-compressed string
$w$ can be solved after linear time preprocessing (where `linear' means `linear in the size of the SLP for $w$') in constant delay.
Here, the MSO-query contains $k$ free first-order variables but no free set variables. This is a restriction that we would like to overcome.
Many existing results on MSO-query enumeration allow queries with free set variables  \cite{AmarilliBMN19,Bagan06,BourhisGJR21,GMS24,LohreyS26,MunozRiveros2025}, in which case `constant delay' has to be replaced
by `output-linear delay'. At the moment, we do not see, how to extend our approach based on factorization trees to free set variables. 

Another desirable extension of our work would be to go from strings to forests. 
Only very recently a version of Simon's factorization tree theorem for forests has been shown \cite{ACS26}
using the framework of forest algebras. It remains to see whether this result can be combined with forest
straight-line programs (in the same way as factorization trees for strings were made accessible to ordinary
straight-line programs in Section~\ref{sec SSLP}). Constant-time traversal algorithms for forest straight-line programs
have been developed in \cite{LohreyMR18}.

Finally, recall that our ranked enumeration algorithm outputs the tuples according to an MSO-definable linear order,
whereas in \cite{BourhisGJR21,GMS24} tuples are enumerated in the order of decreasing weights (elements of an 
 ordered abelian group) and the weight
of an input string is computed by a so-called cost transducers.
It is not clear, how these two concepts relate to each other. Under the cost transducer model there can be tuples
with the same weight, so their order in the enumeration is arbitrary. This cannot occur in our model. 


\def\cprime{$'$} \def\cprime{$'$} \def\cprime{$'$} \def\cprime{$'$}

\appendix 

\section{Appendix}

\begin{algorithm}[h]
\SetKwComment{Comment}{(}{)}
\KwIn{edge $e = (u,v)$ of $\calB$, factorization $\pi_e = \pi_1 \pi_2$ with $\pi_1 \neq \varepsilon \neq \pi_2$, $\pi_3 \in \Sat$}
introduce new node $x$ and replace the edge $e$ by two edges $e_1 = (u,x)$, $e_2 = (x,v)$\;
introduce a new leaf node $y$ and the edge $e_3 = (x,y)$\;
$\pi_{e_1} := \pi_1$;  $\pi_{e_2} := \pi_2$; $\pi_{e_3} := \pi_3$\;
 \Return$y$\label{return-split}\;
\caption{\textsf{split}$(e,\pi_1,\pi_2, \pi_3)$  \label{split}}
\end{algorithm}
\noindent
In the algorithm \textsf{right} below, 
whenever \textsf{split} is called within \textsf{right}, the leaf returned in line~\ref{return-split} of \textsf{split} is also returned by \textsf{right}.
In particular, \textsf{right} terminates after every call of \textsf{split}.
\begin{algorithm}[h]
\KwIn{a leaf $u$ in $\calB$}
\KwOut{$\bot$ if $\pos_A(u) = |A|_{\calG}$, else a leaf $u'$ in the new tree with $\pos_A(u') = \pos_A(u)+1$} 
let $e = (v,w)$ be the first edge on the path from the leaf $u$ to the root with $\pi_{e} \notin \sfR$\;
\If{such an edge $e$ does not exist}{\Return $\bot$}
let $\pi_{e} = \pi' (B,\ell,\alpha) \tau$ with $\pi' \in (\sfL \cup \sfR)^*$, $\tau \in \sfR \cup \{\varepsilon\}$; \label{line5} \tcp*[f]{$\pi_{e}$ must have this form} \\
$(B,\ell,C) := \lreduce(B,\ell,\alpha)$  \tcp*[r]{here we set $C := B$ if $\lreduce(B,\ell,\alpha) = \varepsilon$} 
\eIf{$C \neq B$ or $\pi' \neq \varepsilon$\label{line7}} 
  {let $\beta \in V \cup \Sigma$ be the right symbol in $\rho(C)$\;
    \textsf{split}$(e, \; \pi' (B, \ell, C), \; (C, \ell, \alpha) \tau, \; (C,r, \beta)(\beta,\ell,\omega_L(\beta)))$\label{line9}\tcp*[f]{$(\beta,\ell,\omega_L(\beta)) = \varepsilon$ if $\beta \in \Sigma$}}
   (\tcp*[f]{$\alpha$ is the left symbol in $\rho(B)$ and $\pi' = \varepsilon$}){let $e'= (v, x) \neq e$ be the second edge leaving $v$; \tcp*[f]{this edge $e'$ must exist} \\
      let $\pi_{e'} = (B,r, \beta) \pi''$ with $\pi'' \in (\sfL \cup \sfR)^*$;  \label{line-pi''} \tcp*[f]{$\pi_{e'}$ must have this form} \\
      let $\gamma$ be the right symbol in $\rho(B)$\;
      \eIf(\tcp*[f]{we must have $\gamma \in V$}){$\beta \neq \gamma$\label{line14}}
        {\textsf{split}$(e', (B,r,\gamma), (\gamma,r,\beta) \pi'', (\gamma, \ell, \omega_L(\gamma)))$\label{line15}}
        {\eIf{$\pi''$ has the form $(\gamma, \ell,D) \tilde{\pi}$ with $\tilde{\pi} \neq \varepsilon$\label{line17}}
           {\textsf{split}$(e', \; (B,r,\gamma)(\gamma,\ell,D), \; \tilde{\pi}, \; (D,\ell,\omega_L(D)))$\label{line18}}
           (\tcp*[f]{$\pi'' \in \sfL \cup \{\varepsilon\}$ must hold})
           {\eIf{there is an edge $e'' = (y,z)$ such that there is a path of $\sfL$-labelled edges 
              from $x$ to $y$ and $\pi_{e''} \in \sfL \sfR (\sfL \cup \sfR)^*$\label{line23}}
              {let $\pi_{e''} = (E,\ell,F)\tilde{\pi}$ with $\tilde{\pi} \neq \varepsilon$\; 
               \textsf{split}$(e'', \; (E,\ell,F), \; \tilde{\pi}, \; (F,\ell,\omega_L(F)))$\label{line24}}
              {let $y$ be the unique  leaf below $x$ such that every edge on the path from $x$ to $y$ is $\sfL$-labelled;\label{line20}
              \tcp*[f]{$y=x$ is possible} \\
              \Return $y$\label{line21}}}}}
\caption{\textsf{right}$(u)$  \label{right}}
\end{algorithm}
\FloatBarrier

\noindent
Figures~\ref{fig right 1}--\ref{fig right 6} show the different cases arising in Algorithm~\ref{right}. In each figure, the left part shows the situation
before the call of \textsf{right}, whereas the right part shows the situation after the call of \textsf{right}.
Red node labels ($u$, $v$, $w$, $x$, $y$, $z$) refer to the corresponding node names used in Algorithm~\ref{right}.
Edges labelled by $\sfL^*$ (resp., $\sfR^*$) stand for paths that only contain $\sfL$-labelled (resp., $\sfR$-labelled) edges.

\begin{figure}[h]
 \tikzset{alpha/.style={inner sep = 1pt, fill=white}}
 \tikzset{round/.style={inner sep = 1pt, circle, fill=black}}
  \tikzset{empty/.style={inner sep = 0pt}}
\centering{
\begin{tikzpicture}
\draw  (0,0) node[alpha]{\textcolor{red}{$v$}} -- ++(-.4,-.4) --  ++(.4,-.4) -- ++(-.4,-.4) node[alpha,left=.5mm] {$\pi'$} -- ++(.4,-.4) 
   -- ++(-.4,-.4) -- ++(.4,-.4) node[alpha]  {$B$}  -- ++(-2,-2) node[alpha, label= {[label distance=-1.2mm]left:$\textcolor{red}{w}$}] {$\alpha$}  -- ++(.7,-.7) node[pos=.3,below=.5mm] {$\sfR^*$} node[alpha] {\textcolor{red}{$u$}} ;
   
\draw  (4,0) node[alpha]{\textcolor{red}{$v$}} -- ++(-.4,-.4) --  ++(.4,-.4) -- ++(-.4,-.4) node[alpha,left=.5mm] {$\pi'$} -- ++(.4,-.4) 
   -- ++(-.4,-.4) -- ++(.4,-.4) node[alpha] (B) {$B$}  -- ++(-1,-1) node[alpha] (C) {$C$}  -- ++(-1,-1) node[alpha, label= {[label distance=-1.2mm]left:$\textcolor{red}{w}$}] {$\alpha$} -- ++(.7,-.7) node[pos=.3,below=.5mm] {$\sfR^*$} node[alpha] {\textcolor{red}{$u$}} ;
\draw (C) -- ++(1,-1) node[alpha] {$\beta$}  -- ++(-.7,-.7) node[alpha] {\scriptsize{$\omega_L(\beta)$}} ;
\end{tikzpicture}}
\caption{The case $B \neq C$; i.e., $\alpha$ is not the left symbol in $\rho(B)$; see lines \ref{line7}--\ref{line9}.
The path $\pi'$ from $v$ to $B$ can be also empty. We assume moreover that the $\tau$ from line~\ref{line5} is empty (and similarly in the other figures).
There is a similar picture for $\tau \in \sfR$.}  \label{fig right 1}
\end{figure}

\begin{figure}[h]
 \tikzset{alpha/.style={inner sep = 1pt, fill=white}}
 \tikzset{round/.style={inner sep = 1pt, circle, fill=black}}
  \tikzset{empty/.style={inner sep = 0pt}}
\centering{
\begin{tikzpicture}
\draw  (0,0) node[alpha]{\textcolor{red}{$v$}} -- ++(-.4,-.4) --  ++(.4,-.4) -- ++(-.4,-.4) node[alpha,left=.5mm] {$\pi'$} -- ++(.4,-.4) 
   -- ++(-.4,-.4) -- ++(.4,-.4) node[alpha] {$B$}  -- ++(-1,-1) node[alpha, label= {[label distance=-1.2mm]left:$\textcolor{red}{w}$}] {$\alpha$}  -- ++(.7,-.7) node[pos=.3,below=.5mm] {$\sfR^*$} node[alpha] {\textcolor{red}{$u$}} ;
   
\draw  (4,0) node[alpha]{\textcolor{red}{$v$}} -- ++(-.4,-.4) --  ++(.4,-.4) -- ++(-.4,-.4) node[alpha,left=.5mm] {$\pi'$} -- ++(.4,-.4) 
   -- ++(-.4,-.4) -- ++(.4,-.4) node[alpha] (B) {$B$}  -- ++(-1,-1) node[alpha, label= {[label distance=-1.2mm]left:$\textcolor{red}{w}$}] {$\alpha$} -- ++(.7,-.7) node[pos=.3,below=.5mm] {$\sfR^*$} node[alpha] {\textcolor{red}{$u$}} ;
\draw (B) -- ++(1,-1) node[alpha]{$\beta$}  -- ++(-.7,-.7) node[alpha] {\scriptsize{$\omega_L(\beta)$}} ;
\end{tikzpicture}}
\caption{The case $B = C$ (i.e., $\alpha$ is the left symbol in $\rho(B)$) and $\pi' \neq \varepsilon$; see lines \ref{line7}--\ref{line9}.}  \label{fig right 2}
\end{figure}

\begin{figure}[h]
 \tikzset{alpha/.style={inner sep = 1pt, fill=white}}
 \tikzset{round/.style={inner sep = 1pt, circle, fill=black}}
  \tikzset{empty/.style={inner sep = 0pt}}
\centering{
\begin{tikzpicture}
\draw  (0,0) node[alpha, label= {[label distance=-1.5mm]left:$\textcolor{red}{v}$}] (B) {$B$}  -- ++(-1,-1) node[alpha, label= {[label distance=-1.2mm]left:$\textcolor{red}{w}$}] {$\alpha$}  -- ++(.7,-.7) node[pos=.3,below=.5mm] {$\sfR^*$} node[alpha] {\textcolor{red}{$u$}} ;
\draw (B) -- ++(2,-2) node[alpha]{$\beta$} -- ++(-.4,-.4) --  ++(.4,-.4) -- ++(-.4,-.4)  node[alpha,left=.5mm] {$\pi''$}  -- ++(.4,-.4) 
   -- ++(-.4,-.4) -- ++(.4,-.4)  node[alpha] {\textcolor{red}{$x$}} ;
   
\draw  (6,0) node[alpha, label= {[label distance=-1.5mm]left:$\textcolor{red}{v}$}] (B) {$B$}  -- ++(-1,-1) node[alpha, label= {[label distance=-1.2mm]left:$\textcolor{red}{w}$}] {$\alpha$} -- ++(.7,-.7) node[pos=.3,below=.5mm] {$\sfR^*$} node[alpha] {\textcolor{red}{$u$}} ;
\draw (B) -- ++(1,-1) node[alpha] (gamma) {$\gamma$} -- ++(1,-1) node[alpha]{$\beta$} -- ++(-.4,-.4) --  ++(.4,-.4) -- ++(-.4,-.4)  node[alpha,left=.5mm] {$\pi''$}  -- ++(.4,-.4) 
   -- ++(-.4,-.4) -- ++(.4,-.4) node[alpha] {\textcolor{red}{$x$}} ;
\draw (gamma) -- ++(-.7,-.7) node[alpha] {\scriptsize{$\omega_L(\gamma)$}} ;
\end{tikzpicture}}
\caption{Here, $\rho(B) = \alpha\gamma$ and $\beta \neq \gamma$; see lines \ref{line14} and \ref{line15}.}  \label{fig right 3}
\end{figure}

\begin{figure}[t]
 \tikzset{alpha/.style={inner sep = 1pt, fill=white}}
 \tikzset{round/.style={inner sep = 1pt, circle, fill=black}}
  \tikzset{empty/.style={inner sep = 0pt}}
\centering{
\begin{tikzpicture}
\draw  (0,0) node[alpha, label= {[label distance=-1.5mm]left:$\textcolor{red}{v}$}] (B) {$B$}  -- ++(-1,-1) node[alpha, label= {[label distance=-1.2mm]left:$\textcolor{red}{w}$}] {$\alpha$}  -- ++(.7,-.7) node[pos=.3,below=.5mm] {$\sfR^*$} node[alpha] {\textcolor{red}{$u$}} ;
\draw (B) -- ++(1,-1) node[alpha]{$\gamma$} -- ++(-.7,-.7) node[alpha] {$D$}  --  ++(.7,-.7) -- ++(-.4,-.4) -- ++(.4,-.4) node[alpha,right=.5mm] {$\tilde{\pi}$} 
   -- ++(-.4,-.4) -- ++(.4,-.4) -- ++(-.4,-.4) node[alpha] {\textcolor{red}{$x$}} ;
   
\draw  (6,0) node[alpha, label= {[label distance=-1.5mm]left:$\textcolor{red}{v}$}] (B) {$B$}  -- ++(-1,-1) node[alpha, label= {[label distance=-1.2mm]left:$\textcolor{red}{w}$}] {$\alpha$}  -- ++(.7,-.7) node[pos=.3,below=.5mm] {$\sfR^*$} node[alpha] {\textcolor{red}{$u$}} ;
\draw (B) -- ++(1,-1) node[alpha]{$\gamma$} -- ++(-.7,-.7) node[alpha] (D) {$D$}  --  ++(.7,-.7) -- ++(-.4,-.4) -- ++(.4,-.4) node[alpha,right=.5mm] {$\tilde{\pi}$} 
   -- ++(-.4,-.4) -- ++(.4,-.4) -- ++(-.4,-.4) node[alpha] {\textcolor{red}{$x$}} ;
\draw (D) -- ++(-.7,-.7) node[alpha] {\scriptsize $\omega_L(D)$} ;
\end{tikzpicture}}
\caption{Here, $\rho(B) = \alpha \gamma$ and $\pi''$ from line~\ref{line-pi''}  has the form $(\gamma, \ell,D) \tilde{\pi}$ with $\tilde{\pi} \neq \varepsilon$; see lines \ref{line17} and \ref{line18}.}  \label{fig right 4}
\end{figure}

\begin{figure}[t]
 \tikzset{alpha/.style={inner sep = 1pt, fill=white}}
 \tikzset{round/.style={inner sep = 1pt, circle, fill=black}}
  \tikzset{empty/.style={inner sep = 0pt}}
\centering{
\begin{tikzpicture}
\draw  (0,0) node[alpha, label= {[label distance=-1.5mm]left:$\textcolor{red}{v}$}] (B) {$B$}  -- ++(-1.5,-1.5) node[alpha, label= {[label distance=-1.2mm]left:$\textcolor{red}{w}$}] {$\alpha$}  -- ++(1,-1) node[pos=.3,below=.5mm] {$\sfR^*$} node[alpha] {\textcolor{red}{$u$}} ;
\draw (B) -- ++(1.5,-1.5) node[alpha, label= {[label distance=-1.2mm]right:$\textcolor{red}{x}$}]{$\gamma$} -- ++(-1,-1)  node[pos=.3,below=.5mm] {$\sfL^*$} node[alpha, label= {[label distance=-1.2mm]right:$\textcolor{red}{y}$}] (E) {$E$}  -- ++(-.5,-.5) node[alpha] {$F$}  
-- ++(.4,-.4) -- ++(-.4,-.4) -- ++(.4,-.4) node[alpha,right=.5mm] {$\tilde{\pi}$}  -- ++(-.4,-.4) -- ++(.4,-.4) - ++(-.4,-.4) node[alpha] {\textcolor{red}{$z$}} ; 
\draw (E) -- ++(.5,-.5) ;

\draw  (6,0) node[alpha, label= {[label distance=-1.5mm]left:$\textcolor{red}{v}$}] (B) {$B$}  -- ++(-1.5,-1.5) node[alpha, label= {[label distance=-1.2mm]left:$\textcolor{red}{w}$}] {$\alpha$}  -- ++(1,-1) node[pos=.3,below=.5mm] {$\sfR^*$} node[alpha] {\textcolor{red}{$u$}} ;
\draw (B) -- ++(1.5,-1.5) node[alpha, label= {[label distance=-1.2mm]right:$\textcolor{red}{x}$}]{$\gamma$} -- ++(-1,-1)  node[pos=.3,below=.5mm] {$\sfL^*$} node[alpha, label= {[label distance=-1.2mm]right:$\textcolor{red}{y}$}] (E) {$E$}  -- ++(-.5,-.5) node[alpha] (F) {$F$}  
-- ++(.4,-.4) -- ++(-.4,-.4) -- ++(.4,-.4) node[alpha,right=.5mm] {$\tilde{\pi}$}  -- ++(-.4,-.4) -- ++(.4,-.4) - ++(-.4,-.4) node[alpha] {\textcolor{red}{$z$}} ; 
\draw (E) -- ++(.5,-.5) ;
\draw (F) -- ++(-.7,-.7) node[alpha] {\scriptsize $\omega_L(F)$} ;
\end{tikzpicture}}
\caption{Here, $\rho(B) = \alpha \gamma$ and there is an edge $e'' = (y,z)$ such that there is a path of $\sfL$-labelled edges from $x$ to $y$ and $\pi_{e''} = (E,\ell,F)\tilde{\pi}$ with $\tilde{\pi} \neq \varepsilon$; see lines \ref{line23}--\ref{line24}. We assume moreover that $\pi''$ from line~\ref{line-pi''}  is empty.
There is a similar picture for $\pi'' \in \sfL$.}  \label{fig right 6}
\end{figure}

\begin{figure}[t]
 \tikzset{alpha/.style={inner sep = 1pt, fill=white}}
 \tikzset{round/.style={inner sep = 1pt, circle, fill=black}}
  \tikzset{empty/.style={inner sep = 0pt}}
\centering{
\begin{tikzpicture}
\draw  (0,0) node[alpha, label= {[label distance=-1.5mm]left:$\textcolor{red}{v}$}]
(B) {$B$}  -- ++(-1.5,-1.5) node[alpha, label= {[label distance=-1.2mm]left:$\textcolor{red}{w}$}] {$\alpha$}  -- ++(1,-1) 
node[pos=.3,below=.5mm] {$\sfR^*$} node[alpha] {\textcolor{red}{$u$}} ;
\draw (B) -- ++(1.5,-1.5) node[alpha, label= {[label distance=-1.2mm]right:$\textcolor{red}{x}$}]{$\gamma$} -- ++(-1,-1)  node[pos=.3,below=.5mm] {$\sfL^*$} node[alpha] {\textcolor{red}{$y$}}; 
\end{tikzpicture}}
\caption{Here, $\rho(B) = \alpha \gamma$ and  
$y$ is a leaf; see lines \ref{line20} and \ref{line21}. In this situation, the tree $\calB$ is not modified.}  \label{fig right 5}
\end{figure}

\end{document}